\numberwithin{table}{section}%
\numberwithin{equation}{section}%
\newcommand{\eqdef}{\mathrel{\mathop:}=}
\newtheorem{theorem}{Theorem}[section]
\newtheorem{lemma}[theorem]{Lemma}
\newtheorem{definition}{Definition}[section]
\newcommand{\bsmat}{\begin{bmatrix} }
\newcommand{\esmat}{\end{bmatrix} }
\newcommand{\HLinkShort}[2]{\hyperref[#2]{#1\ref*{#2}}}
\newcommand{\HLink}[2]{\hyperref[#2]{#1~\ref*{#2}}}
\newcommand{\HLinkPage}[2]{\hyperref[#2]{#1~\ref*{#2}%
		$_\text{p\pageref{#2}}$}}
\newcommand{\HLinkPageOnly}[1]{\hyperref[#1]{Page~\refpage*{#1}%
		$_\text{p\pageref{#1}}$}}
\newcommand{\HLinkSuffix}[3]{\hyperref[#2]{#1\ref*{#2}{#3}}}
\newcommand{\HLinkPageSuffix}[3]{\hyperref[#2]{#1\ref*{#2}%
		#3$_\text{p\pageref{#2}}$}}
\newcommand{\deflab}[1]{\label{def:#1}}
\newcommand{\defref}[1]{\HLink{Definition}{def:#1}}%
\newcommand{\algolab}[1]{\label{algo:#1}}
\newcommand{\algoref}[1]{\HLink{Algorithm}{algo:#1}}%
\newcommand{\figlab}[1]{\label{fig:#1}}
\newcommand{\figref}[1]{\HLink{Figure}{fig:#1}}
\providecommand{\lemlab}[1]{\label{xlemma:#1}}
\renewcommand{\lemlab}[1]{\label{xlemma:#1}}
\newcommand{\lemref}[1]{\HLink{Lemma}{xlemma:#1}}%
\newcommand{\thmlab}[1]{{\label{theo:#1}}}
\newcommand{\thmref}[1]{\HLink{Theorem}{theo:#1}}
\newcommand{\seclab}[1]{\label{sec:#1}}
\newcommand{\secref}[1]{\HLink{Section}{sec:#1}}
\providecommand{\eqlab}[1]{}%
\renewcommand{\eqlab}[1]{\label{equation:#1}}
\DeclareMathOperator*{\argmin}{argmin}
\begin{document}

\title{\bf Breaking the Linear Error Barrier in Differentially Private Graph Distance Release}

\author{\vspace{0.3in}\\\textbf{Chenglin Fan, Ping Li, Xiaoyun Li} \\\\
Cognitive Computing Lab\\
Baidu Research\\
10900 NE 8th St. Bellevue, WA 98004, USA\\
  \texttt{\{chenglinfan2020,\ pingli98, lixiaoyun996\}@gmail.com}
}
\date{\vspace{0.1in}}
\maketitle

\begin{abstract}
\vspace{0.2in}
\noindent\footnote{Initially submitted in 2021.}Releasing all pairwise shortest path (APSP) distances between vertices on general graphs under weight Differential Privacy (DP) is known as a challenging task. In the previous attempt of~\cite{DBLP:conf/pods/Sealfon16}, by adding Laplace noise to each edge weight or to each output distance, to achieve DP with some fixed budget, with high probability the maximal absolute error among all published pairwise distances is roughly $O(n)$ where $n$ is the number of nodes. It was shown that this error could be reduced for some special graphs, which, however, is hard for general graphs. Therefore, whether the approximation error can be reduced to sublinear in $n$ is posted as an interesting open problem~\citep{DBLP:conf/pods/Sealfon16,DPorg-open-problem-all-pairs}.

\vspace{0.15in}
\noindent In this paper, we break the linear barrier on the distance approximation error of previous result, by proposing an algorithm that releases a constructed synthetic graph privately. Computing all pairwise distances on the constructed graph only introduces $\tilde  O(n^{1/2})$ error in answering all pairwise shortest path distances for fixed privacy parameter. Our method is based on a novel graph diameter (link length) augmentation via constructing ``shortcuts'' for the paths.
By adding a set of shortcut edges to the original graph, we show that any node pair has a shortest path with link length $\tilde O(n^{1/2})$.
Then by adding noises with some positive mean to the edge weights, we show that the new graph is differentially private and can be published to answer all pairwise shortest path distances with $\tilde O(n^{1/2})$ approximation error using standard APSP computation. Numerical examples are also provided.

\vspace{0.15in}
\noindent Additionally, we also consider the graph with small feedback  vertex set  number. A feedback vertex set (FVS) of a graph is a set of vertices whose removal leaves a graph without cycles, and the feedback vertex set number of a graph, $k$, is the size of a smallest feedback vertex set. We propose a DP algorithm with error rate $\tilde O(k)$, which improves the error of general graphs provided $k=o(n^{1/2})$, i.e., the graph has small feedback vertex set number.

\end{abstract}

\newpage

\section{Introduction}

In recent years, there has been a growing interest in private data analysis,  from academic research to industry practice. Typically, many industrial machine learning applications consist of two steps: 1) collecting data from users; 2) training models using the collected user data. The key question is: throughout this process, how can we protect the privacy of each individual, without leaking sensitive data of each user? In other words, how can we prevent an adversarial attacker from inferring any user's data, given the public information that can be accessed? This has been the overarching question in a line of research that studies private algorithms under various settings. The  concept of differential privacy (DP)~\citep{DBLP:conf/tcc/DworkMNS06,DBLP:conf/pods/BlumDMN05,DBLP:conf/tcc/ChawlaDMSW05,DBLP:conf/icalp/Dwork06} has been a popular approach for rigorously defining and resolving the problem of keeping useful information for model learning, while protecting privacy for each individual. In the traditional setting, databases $D$ and $D'$ are collections of data records and are considered to be neighboring if they are identical except for a single record (individual information). DP requires that the output of running a randomized algorithm on $D$ and $D'$ should have very close probability distributions. There are many types of databases, and in this work, we will specifically focus on the differential privacy of graphs.

\vspace{0.15in}
\noindent\textbf{Weight private graphs.}
In general, there are three types of  private graph models, regarding the nodes, the edges and the edge weights, respectively. The node private model requires DP for two adjacent graphs differing in one node. In the edge private model, two neighboring graphs are defined such that they only differ by one edge, while sharing the same set of nodes.

In this paper, we focus on the weight private graph model, which was first proposed by~\citet{DBLP:conf/pods/Sealfon16}.
In the weight private graph problem, the topology of the graph is public, which means that the adjacent graphs (databases) when considering DP have same nodes and edges, while this is not the case in other two DP models. Specifically, in the weight private model we consider two graphs being adjacent if the difference between the sums of their edge weights is no more than one unit. One example application of this problem is where one tries to release the transportation volume between several places of interest with privacy constraints, where the roads are regarded as edges. In some sense, it might be hard to keep the graph structure private in practice, since one may easily obtain the map by modern tools like Google Earth.  However, in some cases, information like traffic flows could also be sensitive and needs to be kept private. Releasing such private information is well suited for the setting of weight private graph model.

\subsection{Open Problem, and Previous Results} \seclab{sec:open problem}

In this paper, we study the problem of releasing all pairwise shortest-path (APSP) distances with weight privacy. Two standard strategies in DP are: 1) adding Laplace noise to each edge, 2) adding Laplace noise to the output distance matrix. As discussed in~\cite{DBLP:conf/pods/Sealfon16}, the error of these two strategies is roughly $O(\epsilon^{-1}n\log n)$ by adding noise to each edge and $O(\epsilon^{-1}n\sqrt{\log(1/\delta)})$ using DP composition theorems, respectively, where $n$ is the number of nodes in the graph and $(\epsilon,\delta)$ is the DP parameter. Here, the approximation error is measured by the largest absolute difference among all released pairwise distances and the ground truth. As we see, both of these errors are roughly $O(n)$, i.e., linear in $n$ when $\epsilon$ and $\delta$ are fixed. Whether we can achieve differential privacy with error sublinear in $n$ is still an open problem, as recognized by an online repository (DifferentialPrivacy.org\footnote{\url{https://differentialprivacy.org/open-problem-all-pairs/}}) focusing on the research of DP, quoted \textit{``... Is this linear dependence on $n$ inherent, or is it possible to release all-pairs distances with error sublinear in n ?''}

While this linear barrier seems to hold for general graphs, we can derive improved results on several special types of graph. Firstly, for a graph with weights bounded in $[0,M]$,~\cite{DBLP:conf/pods/Sealfon16} picked a subset $Z\in \mathbb{V}$ of vertices as the ``$k$-covering set'' to approximate the original graph.  Each vertex $u$ can map to its closest  vertex $z_u$ in the covering set. For any node pair $u,v$, their distance is approximated by the distance between $z_u$ and $z_v$ plus $O(kM)$ additional error. Then, the author proposed to use $O(|Z|^2)$ pairs of distances to approximate all $O(n^2)$ pairwise distances, which leads to $O(\sqrt{n M})$ approximation error for fixed privacy parameters eventually. Secondly, for grid graph with arbitrary positive weights,~\cite{fan2022distances} proposed to select an intermediate vertex set and divide the shortest path between any node pair into at most three parts depending on its traverse of the set. By connecting the nodes in the intermediate set and applying the standard Laplace mechanism, the authors constructed a DP algorithm with $\tilde O(n^{3/4})$ approximation error.
Thirdly, for trees, ~\cite{DBLP:conf/pods/Sealfon16} gave a recursive algorithm to release all-pairs distances with error
$O(\log^{2.5}
n)$, which was improved to $O(\log^{1.5} n\log^{1.5} h)$ by~\cite{fan2022distances}, where $h$ is the depth of tree and can be as small as $O(\log n)$. The general idea was to build a collection $\mathbb{T}$ of subtrees of $\log n$ levels, where the subtrees in each level are disjoint to each other such that each edge appears in at most $O(\log n)$ trees, allowing us to only add $O(\log n)$ units of Laplace noise to each path in $\mathbb T$. This leads to $poly\log n/\epsilon$ approximation error on trees.

\subsection{Our Results and Techniques}

In this paper, we propose an algorithm that is able to surpass the linear $O(n)$ error of differentially privately releasing all pairwise distances on general weighted graphs. We make use of the improved and extended techniques from~\citet{fan2022distances}
to general graphs to obtain  $\tilde O(n^{1/2})$ approximation error. Moreover, when the graph exhibits some special property having small feedback vertex set number, we develop a new algorithm with further improved error bound.

\vspace{0.1in}
\noindent\textbf{Breaking the linear error barrier for general graphs.} Let $G(\mathbb V,\mathbb E,w)$ denote an undirected graph, where $\mathbb V$ is the set of nodes with $|\mathbb V|=n$, $\mathbb E$ is the set of all edges, and $w$ is the set of corresponding weights. We use $w(e)$ to denote the weight of a specific edge $e\in\mathbb E$. The key challenge in this problem is that, we need to bound the error of all shortest paths instead of just one. Since the shortest paths could overlap with each other, the dependency among the shared edges (and the noises) hinders us from applying standard concentration results. To achieve sublinear error in DP all pairwise distance release problem, our idea, intuitively, is to find ``shortcuts'' that will reduce the number of edges in long shortest paths (consisting of many edges). The algorithm proceeds as follows. Denote $d(u,v)$ as the true shortest path distance between node $u$ and $v$ w.r.t. graph $G(\mathbb V,\mathbb E,w)$. First, we randomly sample $n^{1/2}$ vertices from $\mathbb V$ to form a subset of vertices $\mathbb V_1\subset \mathbb V$. We then
create an edge set  $\mathbb{E}_1=\{(u,v): u\in \mathbb V_1,v\in \mathbb V_1,u\neq v\}$ and assign each edge the weight $w_1(u,v)=d(u,v)$. Denote $\mathbb E_0=: \{(u,v)\in \mathbb E: u\notin \mathbb V_1 \text{ or } v\notin \mathbb V_1\}$ as the edge set between nodes that are not both in $\mathbb V_1$. Next, we add noise to the edges in $\mathbb E_0$ and $\mathbb E_1$, respectively. For $e\in\mathbb E_0$, we add Laplace noise following $Lap(\mu_0,\sigma_0)$ where $\sigma_0=O(1/\epsilon),\mu_0=\sigma \log (n/\gamma)$; for $e\in\mathbb E_1$, the noise is from $Lap(\mu_1,\sigma_1)$ with $\sigma_1:= O(n^{1/2}\sqrt{\log (1/\delta)})/\epsilon)$ and $\mu_1=\sigma_1 \log (n/\gamma)$. Note that, the mean of the Laplace noise is non-zero in our approach. We will show that this novel design guarantees that the APSP distances computed directly from the synthetic and noisy graph $G'$ are lower bounded by the true distances for all node pairs (w.h.p.), and the approximation error is at most $\tilde O(n^{1/2})$ (\thmref{main}).

\vspace{0.1in}
\noindent\textbf{Graph with small feedback vertex set number.}
A feedback vertex set (FVS) of a graph, also called a loop cutset~\citep{DBLP:journals/jacm/Freuder82}, is a set of vertices whose removal leaves a graph without cycles, and the feedback vertex set number of a graph is the size of a smallest feedback vertex set.
The feedback vertex set problem is NP-hard~\citep{DBLP:conf/stoc/Lewis78} and the best known approximation algorithm on undirected graphs is by a factor of two~\citep{DBLP:journals/ai/BeckerG96}.
Finding the size of a minimum feedback vertex set can be solved in $O(1.7347^n)$ time~\citep{DBLP:conf/swat/CaoCL10}, where $n$ is the number of vertices in the undirected  graph. Beyond the results derived for general weighted graphs, we also consider the problem for graphs with small feedback vertex set. We design a new algorithm that releases the distances privately, with $\tilde O(k)$ error where $k$ is the feedback vertex set number. This implies improved error bounds when $k=o(n^{1/2})$ compared with the result for general weighted graphs.

\subsection {More Related Work}

The topic of private graphs has attracted substantial research interests in the recent years~\citep{DBLP:conf/icdm/HayLMJ09,DBLP:conf/pods/RastogiHMS09,DBLP:conf/soda/GuptaLMRT10,DBLP:journals/pvldb/KarwaRSY11,DBLP:conf/tcc/GuptaRU12,DBLP:conf/innovations/BlockiBDS13,DBLP:conf/tcc/KasiviswanathanNRS13,DBLP:conf/focs/BunNSV15,DBLP:conf/pods/Sealfon16,  DBLP:conf/nips/UllmanS19,DBLP:conf/focs/BorgsCSZ18,DBLP:conf/nips/AroraU19,fan2022distances}. Our work focuses on weight private graphs and generalizes the previous work~\citep{fan2022distances} which focused on trees and grid graphs.  We are also aware of two concurrent works~\citep{Article:Ghazi_2022,chen2022all}.

Below, we briefly introduce more related work on three types of privacy on graphs.
\begin{itemize}

\item {\bf Edge Privacy.}~\cite{DBLP:conf/icdm/HayLMJ09} constructed a differentially edge-private algorithm for releasing the degree distribution of a graph.~\cite{DBLP:conf/tcc/GuptaRU12} showed how to answer cut queries in a private edge model.~\cite{DBLP:conf/focs/BlockiBDS12} improved the error for small cuts.~\cite{DBLP:conf/soda/GuptaLMRT10} showed
how to privately release a cut close to the optimal error size.~\cite{DBLP:conf/nips/AroraU19} studied the private sparsification of graphs, which was exemplified by a proposed graph meta-algorithm for privately answering cut-queries with improved accuracy.

\item \textbf{Node Privacy.} Node differential privacy (node-DP) requires the algorithm to hide the presence or absence of a single node and the (arbitrary) set of edges incident to that node~\citep{DBLP:conf/nips/UllmanS19}. However, node-DP is often difficult to achieve without compromising accuracy, because even very simple graph statistics can be highly sensitive to adding or removing a single node.~\cite{DBLP:conf/tcc/KasiviswanathanNRS13} studied the problem of  releasing the entire degree distribution of a graph with node privacy. The work of~\citet{DBLP:conf/innovations/BlockiBDS13} also considered node-level differential private algorithms for analyzing sparse graphs.~\cite{DBLP:conf/focs/BorgsCSZ18} gave a simple, computationally efficient, and node-DP algorithm for estimating the parameter of an Erd\H{o}s-R\'{e}nyi graph.

\item \textbf{Weight Privacy.} As we stated before, the weight private graph model was first proposed by Sealfon~\citep{DBLP:conf/pods/Sealfon16}, which has been summarized in \secref{sec:open problem}. The problem of releasing approximate distances on the path graph (i.e., the graph is composed of only one single path) is equivalent to approximating all threshold functions on a totally ordered
set, whose error bound was obtained in~\citet{DBLP:conf/focs/BunNSV15}.
in~\citet{DBLP:conf/infocom/GhoshDS020}, the authors considered the problem of privately reporting counts of events recorded by devices in different regions of the plane and used a novel hierarchical planar separator to answer queries over arbitrary planar graphs. However, that planar graph is non-private and those techniques can not be used here directly.
When considering our distance release problem specific on path graphs, it is also equivalent to answering all range queries on a histogram under differential privacy, which has been studied in literature, e.g., the matrix mechanism~\citep{DBLP:conf/pods/LiHRMM10}.

\end{itemize}

\section{Background}\label{sec:prelims}

Consider a graph $G=(\mathbb{V},\mathbb{E},w)$, with $w$ the collection of all weights of $\mathbb E$. We use $w(e)$ to denote the weight of an edge $e\in \mathbb{E}$.
Denote $n=|\mathbb{V}|$, $m=|\mathbb{E}|$, and we assume $G$ is connected such that $n-1\leq m$.

\subsection{Differential Privacy (DP)}

We first define the notion of neighboring graphs. In weight private model, since the nodes and edges of the graph $G$ are unchanged, we can simply use edge weights to represent the graph.

\begin{definition}[Neighboring]
Graph $G=(\mathbb{V},\mathbb{E},w)$ and $G'=(\mathbb{V},\mathbb{E},w')$ are called neighboring, noted as $G \sim G'$, if
$$ ||w-w'||_1\eqdef \sum_{e\in \mathbb{E}} |w(e)-w'(e)| \leq 1. $$
\end{definition}

The Differential Privacy (DP) introduced by~\citet{DBLP:conf/icalp/Dwork06} is defined below adapted to our problem.

\begin{definition}[Differential Privacy~\citep{DBLP:conf/icalp/Dwork06}]
If for any two neighboring graphs $G=(\mathbb{V},\mathbb{E},w)$ and $G'=(\mathbb{V},\mathbb{E},w')$, a randomized algorithm $\mathbbm{A}$, and a set of outcomes $O\subset Range(\mathbbm{A})$, it holds that
$$ Pr[\mathbbm{A}(G)\in O] \leq e^{\epsilon} Pr[\mathbbm{A}(G')\in O]+\delta,$$
we say algorithm $\mathbbm{A}$ is $(\epsilon,\delta)$-differentially private.
\end{definition}

If $\delta=0$, we say that the algorithm is $\epsilon$-DP. The parameter $\delta$ is usually interpreted as the probability allowed for bad cases where $\epsilon$-DP is violated. Intuitively, differential privacy requires that after changing the database by a little (the total weights in our case), the output should not be too different from that of the original database. Differential privacy may be achieved through the introduction of noise to the output. To attain general $(\epsilon,\delta)$-DP, we may add Gaussian noise to the output. To achieve $\epsilon$-DP, the noise added typically comes from the Laplace distribution. Smaller $\epsilon$ and $\delta$ indicate stronger privacy, which, however, usually sacrifices utility. Thus, one of the central topics in the differential privacy literature is to reduce the scale of noise added, while satisfying the privacy constraint.

In this work, we focus on the popular approach to achieve DP by adding Laplace noises. The Laplace distribution with parameter $b$ has density function $f(x)=\frac{1}{2b}\exp(-|x|/b)$.

\begin{lemma}[\cite{DBLP:conf/icalp/Dwork06}] \lemlab{lem:basic}
For a function $f:\mathcal G\rightarrow \mathbb{R}$ with $\mathcal G$ the input space of graphs, define the sensitivity
$$\triangle_f=\max_{G \sim G'} || f(G)-f(G')||_1,$$
where $G,G'$ are two neighboring graphs. Let $X$ be a random noise drawn from $Lap(0,\triangle_f/\epsilon)$. The Laplace mechanism outputs
$$ M_{f,\epsilon}(G)  = f(G)+X,$$
which achieves $\epsilon$-differential privacy.
\end{lemma}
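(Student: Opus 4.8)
The plan is to use the classical pointwise density-ratio argument for the Laplace mechanism. Fix two arbitrary neighboring graphs $G \sim G'$. Since $f$ is real-valued, the output $M_{f,\epsilon}(G) = f(G) + X$ with $X \sim Lap(0,\triangle_f/\epsilon)$ is a real-valued random variable, and by the stated Laplace density it has density
$$p_G(t) = \frac{\epsilon}{2\triangle_f}\exp\!\Big(-\frac{\epsilon\,|t - f(G)|}{\triangle_f}\Big), \qquad t \in \mathbb{R},$$
and likewise $p_{G'}$ with $f(G)$ replaced by $f(G')$. The first step is simply to write down these two densities explicitly.

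Next I would compare the two densities at a common point. Forming the ratio,
$$\frac{p_G(t)}{p_{G'}(t)} = \exp\!\left(\frac{\epsilon\big(|t - f(G')| - |t - f(G)|\big)}{\triangle_f}\right).$$
By the reverse triangle inequality, $|t - f(G')| - |t - f(G)| \le |f(G) - f(G')|$; and because $f$ is scalar, $|f(G) - f(G')| = \|f(G) - f(G')\|_1 \le \triangle_f$ by the definition of sensitivity. Substituting, we get $p_G(t)/p_{G'}(t) \le e^{\epsilon}$ for every $t \in \mathbb{R}$. (Interchanging $G$ and $G'$ gives the matching lower bound $e^{-\epsilon}$, so the estimate is automatically two-sided.)

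Finally I would integrate this pointwise bound over an arbitrary measurable outcome set $O \subset Range(M_{f,\epsilon}) = \mathbb{R}$:
$$Pr[M_{f,\epsilon}(G) \in O] = \int_O p_G(t)\,dt \le e^{\epsilon}\int_O p_{G'}(t)\,dt = e^{\epsilon}\,Pr[M_{f,\epsilon}(G') \in O].$$
Since this holds with additive slack $0$, the mechanism satisfies $(\epsilon,0)$-DP, i.e.\ $\epsilon$-DP. There is no real obstacle in this proof — it is a standard result of~\citet{DBLP:conf/icalp/Dwork06}; the only point needing care is the reverse-triangle-inequality step, which is precisely where the scale $\triangle_f/\epsilon$ is calibrated so that the log-density $t \mapsto -\epsilon|t-f(G)|/\triangle_f$ is $(\epsilon/\triangle_f)$-Lipschitz, and everything else (measurability of the densities, integrating a pointwise inequality) is routine.
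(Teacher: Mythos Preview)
Your proof is correct and is exactly the standard density-ratio argument for the Laplace mechanism. Note that the paper does not actually prove this lemma at all --- it is stated as a cited result from~\citet{DBLP:conf/icalp/Dwork06} and used as a black box --- so there is no ``paper's own proof'' to compare against; what you have written is essentially the original proof from that reference.
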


One important and attractive property of DP is that, different DP algorithms can be easily combined together, also with strict DP guarantee.

\begin{lemma}[Advanced Composition Theorem~\citep{DBLP:conf/focs/DworkRV10}] \lemlab{lem:ACT}
For any $\epsilon, \delta, \delta' \geq  0$, the adaptive composition of $k$ times
$(\epsilon, \delta)$-differentially private mechanisms is $(\epsilon', k\delta + \delta'
)$-differentially private for
$$ \epsilon’ = \sqrt{2k\log(1/\delta’)} \cdot \epsilon + k \cdot \epsilon(e^\epsilon-1),$$
which is $O(\sqrt{k\log (1/\delta')}\cdot \epsilon)$ when $k \leq 1/\epsilon^2$. In particular, if $\epsilon' \in (0,1),\delta,\delta'>0$, the composition of $k$ times $(\epsilon,0)$-differentially private mechanism is $(\epsilon',\delta')$-differentially private for
$$ \epsilon=\epsilon'/(\sqrt{8k \log (1/\delta')}). $$

\end{lemma}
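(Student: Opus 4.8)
This is the classical advanced composition theorem, so the plan is to reproduce the standard argument built around the \emph{privacy loss random variable}. Fix two neighboring graphs $G\sim G'$ and write $\mathbbm{A}=(M_1,\dots,M_k)$ for the $k$-fold adaptive composition, where the choice of $M_i$ (and its internal randomness) may depend on the earlier outputs $y_{<i}=(y_1,\dots,y_{i-1})$. For an output tuple $y=(y_1,\dots,y_k)$ define
$$L(y)=\ln\frac{\Pr[\mathbbm{A}(G)=y]}{\Pr[\mathbbm{A}(G')=y]}=\sum_{i=1}^{k}L_i,\quad\text{where } L_i=\ln\frac{\Pr[M_i(G)=y_i\mid y_{<i}]}{\Pr[M_i(G')=y_i\mid y_{<i}]},$$
so the loss of the composition telescopes into a sum of per-round losses. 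The goal is to show that, except on an event of probability at most $k\delta+\delta'$ under $y\sim\mathbbm{A}(G)$, one has $L(y)\le\epsilon'$; a bad-set argument then upgrades this into an $(\epsilon',k\delta+\delta')$-DP guarantee.

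First I would handle the pure case $\delta=0$. Then, conditioned on any fixed $y_{<i}$, the mechanism $M_i$ is $\epsilon$-DP, so $|L_i|\le\epsilon$ pointwise. Writing $D_i(\cdot)=\Pr[M_i(G)=\cdot\mid y_{<i}]$ and $D_i'(\cdot)=\Pr[M_i(G')=\cdot\mid y_{<i}]$, the conditional mean satisfies $\mathbb{E}[L_i\mid y_{<i}]=D_{\mathrm{KL}}(D_i\,\|\,D_i')\le\epsilon(e^\epsilon-1)$, which follows from $D_{\mathrm{KL}}(D_i\|D_i')+D_{\mathrm{KL}}(D_i'\|D_i)=\sum_{y_i}(D_i(y_i)-D_i'(y_i))\ln\frac{D_i(y_i)}{D_i'(y_i)}\le\epsilon(e^\epsilon-1)$ together with non-negativity of both $D_{\mathrm{KL}}$ terms, using $|\ln(D_i/D_i')|\le\epsilon$ and $|D_i-D_i'|\le(e^\epsilon-1)D_i'$. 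Next, under $y\sim\mathbbm{A}(G)$ the centered variables $Z_i=L_i-\mathbb{E}[L_i\mid y_{<i}]$ form a martingale difference sequence, and since $L_i\in[-\epsilon,\epsilon]$ each $Z_i$ lies conditionally in an interval of length $2\epsilon$; Hoeffding's lemma then gives $\mathbb{E}[e^{\lambda Z_i}\mid y_{<i}]\le e^{\lambda^2\epsilon^2/2}$, and peeling off the terms and applying Markov's inequality to $e^{\lambda\sum_i Z_i}$ yields $\Pr_{y\sim\mathbbm{A}(G)}[\,\sum_i L_i>k\epsilon(e^\epsilon-1)+t\,]\le e^{-t^2/(2k\epsilon^2)}$. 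Taking $t=\sqrt{2k\ln(1/\delta')}\,\epsilon$ makes the right-hand side $\delta'$, so $\Pr_{y\sim\mathbbm{A}(G)}[L(y)>\epsilon']\le\delta'$ with $\epsilon'$ exactly as in the statement. Finally, for any outcome set $O$ put $B=\{y:L(y)>\epsilon'\}$; then $\Pr[\mathbbm{A}(G)\in O]\le\Pr[\mathbbm{A}(G)\in O\setminus B]+\Pr[\mathbbm{A}(G)\in B]\le e^{\epsilon'}\Pr[\mathbbm{A}(G')\in O]+\delta'$, since on $O\setminus B$ the density ratio is at most $e^{\epsilon'}$. For general $\delta>0$ I would first invoke the standard reduction that an $(\epsilon,\delta)$-DP mechanism is, up to events of probability $\delta$ on each of $M_i(G)$ and $M_i(G')$, a mixture behaving like an $(\epsilon,0)$-DP mechanism; a union bound over the $k$ rounds contributes the extra $k\delta$, and the $\delta=0$ analysis applies to the remaining ``good'' part.

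I expect the main obstacle to be the adaptivity: because $M_i$ depends on $y_{<i}$, the $L_i$ are not independent, so one must verify that the pointwise bound $|L_i|\le\epsilon$ and the conditional-mean bound $\mathbb{E}[L_i\mid y_{<i}]\le\epsilon(e^\epsilon-1)$ hold \emph{uniformly} in $y_{<i}$ before the martingale/MGF machinery can be applied; the other delicate points are the two-way translation between ``privacy loss bounded with high probability'' and the $(\epsilon,\delta)$-DP inequality, and the $\delta$-stripping step in the general case. The closing special case is then routine: with $\delta=0$ and $\epsilon=\epsilon'/\sqrt{8k\ln(1/\delta')}$, the first term of $\epsilon'$ equals $\epsilon'/2$, while for $\epsilon'<1$ and $\delta'$ bounded away from $1$ one has $k\epsilon(e^\epsilon-1)\le 2k\epsilon^2\le\epsilon'/2$, so the overall composed budget is at most $\epsilon'$ as claimed.
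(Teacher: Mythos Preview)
The paper does not prove this lemma at all: it is stated with a citation to \cite{DBLP:conf/focs/DworkRV10} and used as a black box, with no accompanying argument. So there is no ``paper's own proof'' to compare against.

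That said, your proposal is a faithful and essentially complete reproduction of the original Dwork--Rothblum--Vadhan argument: decompose the privacy loss as a sum of per-round log-likelihood ratios, bound each conditional mean by $\epsilon(e^\epsilon-1)$ via the symmetrized KL identity, apply Azuma/Hoeffding to the centered martingale-difference sequence (using $|L_i|\le\epsilon$ uniformly in $y_{<i}$), and translate the high-probability bound on $L(y)$ into an $(\epsilon',\delta')$-DP statement via the bad-set argument. The $\delta>0$ reduction you sketch is also the standard one. Your verification of the ``in particular'' clause is fine modulo a mild implicit assumption (roughly $\delta'\le e^{-1/2}$ so that $2\log(1/\delta')\ge 1\ge\epsilon'$), which is the same slack present in most textbook statements of this corollary. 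In short, your proof is correct and is exactly the argument the paper is citing; it simply goes beyond what the paper itself supplies.
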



\subsection{Basic Probability Fact}

We will now introduce a few basic tools which will be used in the error analysis.
A number of differential privacy techniques incorporate noise sampled according to the Laplace
distribution.

\vspace{0.15in}

\begin{lemma}\lemlab{lem:single}
Consider $n$ i.i.d. random variables $Z_1,Z_2,...,Z_n$ from $Lap(b)$. With probability at least $1-\gamma$, $\forall 0<\gamma<1$, all $n$ Laplace random variables have magnitude bounded by $b \log(n/\gamma)$.
\end{lemma}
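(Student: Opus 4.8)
The plan is to combine a one-sided tail bound for a single Laplace variable with a union bound over the $n$ coordinates. First I would recall the cumulative distribution of $Lap(b)$: for $X\sim Lap(b)$ with density $\frac{1}{2b}\exp(-|x|/b)$, a direct integration gives $\Pr[|X|>t]=\int_{t}^{\infty}\frac{1}{b}\exp(-x/b)\,dx=\exp(-t/b)$ for every $t\ge 0$. This is the only computation needed, and it is elementary.

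Next I would instantiate the threshold. Setting $t=b\log(n/\gamma)$, the tail bound yields $\Pr[|Z_i|>b\log(n/\gamma)]=\exp(-\log(n/\gamma))=\gamma/n$ for each fixed $i\in\{1,\dots,n\}$. Since the claim concerns the event that \emph{some} $Z_i$ exceeds the threshold, I would apply the union bound over the $n$ indices: $\Pr\big[\exists\, i\in\{1,\dots,n\}:\ |Z_i|>b\log(n/\gamma)\big]\le \sum_{i=1}^{n}\Pr[|Z_i|>b\log(n/\gamma)]=n\cdot(\gamma/\allowbreak n)=\gamma$. Note that independence of the $Z_i$ is not actually required for this step — the union bound holds regardless — though the lemma states the variables as i.i.d.

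Taking complements, the event that all $n$ variables satisfy $|Z_i|\le b\log(n/\gamma)$ simultaneously has probability at least $1-\gamma$, which is exactly the assertion. I do not anticipate any genuine obstacle here; the only point deserving a word of care is that $\log(n/\gamma)\ge 0$ (so that the tail bound $\exp(-t/b)$ is applied at a valid nonnegative argument $t$), which holds because $0<\gamma<1$ and $n\ge 1$ force $n/\gamma>1$. I would state the proof in two or three sentences along these lines.
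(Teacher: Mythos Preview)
Your proof is correct and essentially the same as the paper's: both compute the single-variable tail $\Pr[|Z_i|>t]=\exp(-t/b)$, plug in $t=b\log(n/\gamma)$ to get $\gamma/n$, and then pass to all $n$ variables. The only cosmetic difference is that the paper uses independence to write $\Pr[\max_i|Z_i|\le t]=(1-\gamma/n)^n\ge 1-\gamma$, whereas you use a union bound (which, as you note, does not even require independence); both routes yield the same $1-\gamma$ guarantee.
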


\begin{proof}
For each $Z_i$, $P(|Z_i|\leq t)=1-\exp(-\frac{t}{b})$. Thus,
\begin{align*}
    P[\max_{i=1,...,n} |Z_i|\leq t]= (1-\exp(-\frac{t}{b}))^n.
\end{align*}
Let $t=b\log(n/\gamma)$. With $n\geq 1$, the probability becomes $(1-\frac{\gamma}{n})^n\geq 1-\gamma$ as claimed.
\end{proof}

\section{Private Synthetic Graph Release for APSP Distance}

We formally define the problem of interest as below.
\begin{definition}  \deflab{distance}
(Approximate Distances  Release)
Given an graph $G(\mathbb{V},\mathbb{E},w)$, our task is to release all pairwise shortest path distances privately. Let $\hat d(\cdot,\cdot)$ be the output (approximate) distance function. Our object is to  minimize the maximal absolute error over all pairs, namely, $\max \{|\hat d(u,v)-d(u,v)|: u,v\in \mathbb V \}$, with $\hat d(u,v)$, $\forall u,v\in \mathbb V$ being DP to weight $w$. We call $|\hat d(u,v)-d(u,v)|$ the additive error of node pair $(u,v)$.
\end{definition}

Several basic notions will be used in our analysis.

\begin{definition}[Path]
A path $\tilde P_{v_1,v_L}$ between $v_1, v_L\in\mathbb V$ is defined as $\tilde P_{v_1,v_L}=\{(v_i,v_{i+1}), 1 \leq i\leq L-1\}$, the collection of edges between connected node sequence $v_1,v_2,...,v_k$, with some $L\leq n$.  A segment of path $\tilde P$ is defined as a \textbf{consecutive} sub-path, $\{(v_i,v_i+1),s\leq i \leq t-1\}$ for some $s,t$.
\end{definition}

\begin{definition}[Shortest path] \deflab{def:shortest path}
Let $S_{u,v}$ be the set of all paths between $u, v \in \mathbb{V}$. For a path $P \in S_{u,v}$ and weights $w$, denote $d(P,\mathcal W)=\sum_{e\in P}w(e)$. The shortest path of $u,v\in\mathbb V$ w.r.t. weights $w$ is
$$P_{u,v}=\argmin_{P \in S_{u,v}} d(P,w),$$
and the shortest path distance is $d(P_{u,v},w)$.
\end{definition}

\begin{definition}[Canonical  shortest path] \deflab{def:canonical}
For a given graph $G(\mathbb V,\mathbb E,w)$, let $\mathbb V_1\subseteq \mathbb V$ be a subset of $\mathbb V$. Let $P_{x,y}=\{x,v_1,v_2,...,v_L,y\}$ be the shortest path between $x,y\in\mathbb V$. Then a canonical shortest path $P^{\mathbb V_1}_{x,y}$ is defined by either of the following:
\begin{enumerate}
    \item $P^{\mathbb V_1}_{x,y}\equiv P_{x,y}$, if $P_{x,y}$ contains at most one vertex in $\mathbb V_1$;

    \item $P^{\mathbb V_1}_{x,y}=\{x,v_1,...,p,q,...,v_L,y\}$, if $p,q$ are the closest nodes in $P_{x,y}$ to $x,y$ respectively, and $p,q\in\mathbb V_1$.
\end{enumerate}

\end{definition}

Intuitively, the canonical shortest path finds a shortcut by directly connecting two nodes in a set $\mathbb V_1$. This definition will be the key in our algorithm and construction. One important fact is that, if we connect each pair of nodes in $\mathbb V_1$ and assign the true pairwise distance between them as the edge weight (as in our main algorithm), then $d(P_{u,v},w)\equiv d(P^{\mathbb V_1}_{u,v},w)$ by definition.

\subsection{Challenges and the New Algorithm}
	
First, we revisit the prior approach to release  all pairwise distances in the private weight model and the challenges. Since neighboring (total) weights differ by at most one unit in $l_1$ norm, the distance between any two nodes also changes by at most one unit.  Releasing a single path can be done by computing the accurate shortest path distance between pairs of inputs and adding Laplace noise proportional to $1/\epsilon$, which is a trivial task in DP. However, releasing all distances privately is much more challenging, since the Laplace mechanism requires $Lap(n^2/\epsilon)$ noise (because of the $n^2$ queries), resulting in the $O(n)$ error eventually. As introduced in \secref{sec:open problem}, in prior literature, there are two ways (e.g.,~\cite{DBLP:conf/pods/Sealfon16}) to achieve this error level, either by adding noise to each edge or to the output distances. In this paper, we will focus on the first strategy.

The simple approach is as follows: 1) add $Lap(1/\epsilon)$ noise to each edge, i.e., $w'=w+Lap(1/\epsilon)$, to get graph $G'=(\mathbb V,\mathbb E,w')$; 2) report all pairwise distances on $G'$ (note that, the shortest paths on $G'$ found might be different from the true shortest paths in $G$).
By the Laplace mechanism~\citep{DBLP:conf/icalp/Dwork06}, all the weights in graph $G$ become differentially private. By the post-processing property of DP, all the output pairwise distances are also DP. Note that there are $O(n^2)$ pairs of vertices, so the number of edges is bounded by $O(n^2)$. By \lemref{lem:single}, with probability $1-\gamma$, all $O(n^2)$ Laplace random variables will have magnitude bounded by $(1/\epsilon) \log(n^2/\gamma)$, so the length of every path in the released synthetic graph is within $n \log(n/\gamma)/\epsilon$ additive error, thus roughly $O(n)$.

Statistically, the problem can be described informally and approximately as: given a set of $n^2$ i.i.d. Laplace random variables, we want to bound the sum of  the $n$ variables among $n^2$ size-$n$ subsets simultaneously. To our best knowledge, this problem has no solution sublinear in $n$ in literature, and a straightforward additive bound exactly leads to the previous $O(n)$ error as mentioned above. In this section, we propose an algorithm that leverages the concept of canonical shortest paths (Definition~\defref{def:canonical}), which finally leads to $\tilde O(n^{1/2})$ additive approximation error.

\begin{algorithm2e}[t!]
\SetKwInput{Input}{Input}
\SetKwInOut{Output}{Output}
\Input{General graph $G=(\mathbb{V},\mathbb{E},w)$, private  parameter $\epsilon,\delta,\gamma$.}
\DontPrintSemicolon
$\epsilon'=\epsilon/2$.\;
	Sample $(n^{1/2})$ vertices from $\mathbb V$ uniformly and add them to $\mathbb V_1$.\;
	
Create an edge set  $\mathbb{E}_1:=\{(u,v): u\in \mathbb V_1,v\in \mathbb V_1,u\neq v\}$.\;
For each $e\in \mathbb{E}_1 $,  let $APSP(e)$   be the exact shortest path distance in $G$ between $(u,v)$ where $e=(u,v)$.\;
\For {each edge $e\in \mathbb{E}_1$ }
{
Let 	$\sigma_1:= (2\sqrt{2} n^{1/2}\sqrt{\log (1/\delta)})/\epsilon'$ and $\mu_1:= \sigma_1 \log (n/\gamma)$.\;
	Compute $X_e:= Lap(\mu_1,\sigma_1)$.\;
	$w'(e):= APSP(e)+X_e$.\;
}

$\mathbb{E}_0:= \mathbb E\setminus \mathbb{E}_1$\;
\For {each edge $e\in  \mathbb{E}_0$}
{
	Let $\sigma_0:= 1/\epsilon',\mu_0:= \sigma_0 \log (n^2/\gamma)$.\;
	Compute $X_e:= Lap(\mu_0,\sigma_0)$.\;
	$w'(e):= w(e)+X_e$.\;
}
Compute all pairwise distances in graph
$G'=(\mathbb{V},\mathbb{E}'=\mathbb{E}_0 \cup \mathbb{E}_1 ,w')$.\;
\Output{All pairwise distances of $G'$.}
\caption{Private all pairwise shortest paths distance release.}
\algolab{private_path}
\end{algorithm2e}

As summarized in \algoref{private_path}, for a general weighted graph $G(\mathbb V,\mathbb E,w)$, our algorithm proceeds as follows:
\begin{enumerate}[1)]

\item Sample $n^{1/2}$ vertices from $\mathbb V$ uniformly to form set $\mathbb V_1$;

\item Create an edge set  $\mathbb{E}_1:=\{(u,v): u\in \mathbb V_1,v\in \mathbb V_1,u\neq v\}$.
For each $e=(u,v)\in \mathbb{E}_1 $,  set $w(e)=d(u,v)$ as the true shortest path distance;
	
\item Add $Lap(\mu_1,\sigma_1)$  noise to each edge $e\in \mathbb{E}_1$, i.e., $w'=w+Lap(\mu_1,\sigma_1)$;

\item Add $Lap(\mu_0,\sigma_0)$  noise to each edge $e\in \mathbb{E}_0$, i.e., $w'=w+Lap(\mu_0,\sigma_0)$ where
$\mathbb{E}_0:= \mathbb E\setminus \mathbb{E}_1$;

\item Obtain the merged graph $G'=(\mathbb V,\mathbb{E}'=\mathbb{E}_0 \cup \mathbb{E}_1,w')$, and compute all pairwise distances on $G'$.
\end{enumerate}

\noindent\textbf{Comments.} In \algoref{private_path}, the final estimated distances are calculated by using standard APSP distance computation on graph $G'$. One implication of this construction is that, beyond outputting the pairwise distances with privacy, we can in fact also publish the graph $G'(\mathbb V,\mathbb E',w')$ privately and allow the users to compute the private distances using standard algorithms by themselves. This is due to the positive mean of the Laplace noises. If we instead add zero-mean (centered) Laplace noises as in most standard approaches, then computing the APSP distances on $G'$ would not ensure the desired approximation error rate. We will provide more discussion on the role of the shifted noises at the end of this section.

Next, we provide theoretical analysis of our proposed algorithm. Firstly, recall that $\mathbb V_1$ contains $n^{1/2}$ uniformly sampled nodes from $\mathbb V$. Our analysis starts with the following fact.

\vspace{0.15in}

\begin{lemma}[Sampling Intersection] \lemlab{lem:sample}
Suppose $U\subseteq \mathbb V$ is a fixed vertex set and $|U|=n^{1/2}\log (n^2/\gamma)$. With probability at least $1-\gamma/n^2$, we have that $U\cap \mathbb V_1 \neq \emptyset$.
\end{lemma}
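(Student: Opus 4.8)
The plan is to argue by the standard ``balls into bins'' / coupon-collector style calculation: since $\mathbb V_1$ is formed by sampling $n^{1/2}$ vertices from $\mathbb V$ uniformly, each individual sample misses the fixed set $U$ with probability $1-|U|/n$, and if the samples were independent the probability that all $n^{1/2}$ of them miss $U$ would be $(1-|U|/n)^{n^{1/2}}$. First I would plug in $|U| = n^{1/2}\log(n^2/\gamma)$, so that $|U|/n = \log(n^2/\gamma)/n^{1/2}$, and use the elementary inequality $(1-x)^k \le e^{-xk}$ to bound
\[
\Pr[U\cap\mathbb V_1=\emptyset] \;\le\; \left(1-\frac{\log(n^2/\gamma)}{n^{1/2}}\right)^{n^{1/2}} \;\le\; e^{-\log(n^2/\gamma)} \;=\; \frac{\gamma}{n^2},
\]
which is exactly the claimed bound. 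Taking the complement then gives $\Pr[U\cap\mathbb V_1\neq\emptyset]\ge 1-\gamma/n^2$.

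The one point that needs a word of care is whether the $n^{1/2}$ vertices are sampled with or without replacement; the algorithm statement (``Sample $(n^{1/2})$ vertices from $\mathbb V$ uniformly'') is most naturally read as without replacement, but sampling without replacement only makes the event $U\cap\mathbb V_1=\emptyset$ \emph{less} likely than under sampling with replacement (negative association / a direct hypergeometric-vs-binomial comparison), so the bound above still holds. I would either invoke this monotonicity in one sentence or, even more simply, bound each of the $n^{1/2}$ conditional ``miss'' probabilities by $1-|U|/n$ (the conditional probability of missing $U$ on the $i$-th draw, given previous misses, is $(n-|U|-(i-1))/(n-(i-1)) \le (n-|U|)/n$) and multiply. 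A minor implicit assumption is $|U|\le n$, i.e.\ $n^{1/2}\log(n^2/\gamma)\le n$, which holds for $n$ large enough relative to $\gamma$ and is consistent with the asymptotic regime of the paper; I would note this is the regime of interest.

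There is essentially no hard part here — the lemma is a one-line concentration estimate — so the ``main obstacle'' is purely expository: making sure the with/without-replacement issue is dispatched cleanly and that the inequality $(1-x)^k\le e^{-xk}$ is applied in the right direction. The lemma will then be used downstream to argue that every long shortest path, which (being long) passes through many vertices, with high probability contains a vertex of $\mathbb V_1$, so that the canonical shortest path of Definition~\ref*{def:canonical} genuinely provides a shortcut; the $\gamma/n^2$ slack is chosen precisely so that a union bound over all $O(n^2)$ node pairs still leaves failure probability $O(\gamma)$.
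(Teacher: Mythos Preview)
Your proposal is correct and is essentially the same argument as the paper's: the paper bounds $\Pr[U\cap\mathbb V_1=\emptyset]$ by $(1-1/n^{1/2})^{|U|}$ (viewing each element of $U$ as missing $\mathbb V_1$) while you bound it by $(1-|U|/n)^{n^{1/2}}$ (viewing each sample of $\mathbb V_1$ as missing $U$), and both collapse via $(1-x)^k\le e^{-xk}$ to $\gamma/n^2$. Your treatment is in fact more careful than the paper's, which does not explicitly address the with/without-replacement dependence.
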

\begin{proof}
Suppose we pick a random vertex $v$ from $\mathbb V$ each time, the probability that $v$ is not in $\mathbb V_1$ is $1-1/n^{1/2}$. Then the probability of interest is bounded by $Pr[U\cap \mathbb V_1 = \emptyset] \leq (1-1/n^{1/2})^{(\sqrt{n} \log (n^2/\gamma))}=\gamma/n^2$.
\end{proof}

\vspace{0.15in}

\subsection{Privacy Analysis}

We show that \algoref{private_path} indeed achieves $(\epsilon,\delta)$-DP.

\begin{lemma}
\algoref{private_path} achieves $(\epsilon,\delta)$-DP.
\end{lemma}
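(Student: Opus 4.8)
The plan is to split the algorithm's randomized computation into two pieces that share the data $w$ but use independent noise — the release of the perturbed weights on $\mathbb{E}_0$, and the release of the perturbed exact distances on $\mathbb{E}_1$ — bound the privacy loss of each piece, and glue them with basic sequential composition; the final APSP computation on $G'$ is then free by post-processing. A preliminary observation makes this legitimate: the sampling of $\mathbb{V}_1$ (hence of $\mathbb{E}_1$ and $\mathbb{E}_0$) depends only on the public vertex set $\mathbb{V}$, never on the weights, so it suffices to establish $(\epsilon,\delta)$-DP conditionally on an arbitrary fixed realization of $\mathbb{V}_1$ — averaging a family of $(\epsilon,\delta)$-DP mechanisms over data-independent coins preserves $(\epsilon,\delta)$-DP. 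A second routine observation, used in both pieces, is that the nonzero means $\mu_0,\mu_1$ of the Laplace noise are privacy-irrelevant: adding $Lap(\mu,\sigma)=\mu+Lap(0,\sigma)$ to $f(w)$ is the ordinary (zero-mean) Laplace mechanism applied to $w\mapsto f(w)+\mu$, which has exactly the same sensitivity as $f$.

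For the $\mathbb{E}_0$ piece: the vector-valued map $w\mapsto (w(e))_{e\in\mathbb{E}_0}$ has $\ell_1$-sensitivity at most $1$, since $\sum_{e\in\mathbb{E}_0}|w(e)-w'(e)|\le\|w-w'\|_1\le 1$ for neighboring $G\sim G'$. Hence, by the multivariate Laplace mechanism (the vector analogue of Lemma~\lemref{lem:basic}, immediate from the product form of the Laplace density) together with the shift observation, releasing $\{w(e)+\mu_0+Z_e:e\in\mathbb{E}_0\}$ with $Z_e\sim Lap(0,\sigma_0)$ and $\sigma_0=1/\epsilon'$ is $\epsilon'$-DP.

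For the $\mathbb{E}_1$ piece: the crucial fact is that for each pair $(u,v)$ the single query $w\mapsto d_G(u,v)$ is $1$-Lipschitz in $\|\cdot\|_1$, i.e.\ $|d_G(u,v)-d_{G'}(u,v)|\le\|w-w'\|_1\le 1$ — evaluating the $w$-shortest $u$–$v$ path under $w'$ gives $d_{G'}(u,v)\le d_G(u,v)+\|w-w'\|_1$, and symmetrically. Thus each coordinate release $d_G(u,v)+\mu_1+Z_e$, $Z_e\sim Lap(0,\sigma_1)$ with $\sigma_1=1/\epsilon_0$, is $\epsilon_0$-DP. There are $|\mathbb{E}_1|=\binom{|\mathbb{V}_1|}{2}\le n$ such coordinates, so viewing them as a $k$-fold composition with $k\le n$ and applying the advanced composition theorem (Lemma~\lemref{lem:ACT}, valid since $\epsilon'=\epsilon/2\in(0,1)$), this piece is $(\epsilon',\delta)$-DP as long as $\epsilon_0\le \epsilon'/\sqrt{8k\log(1/\delta)}$. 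This holds with room to spare for the algorithm's choice, because $\sigma_1=1/\epsilon_0=2\sqrt{2}\,n^{1/2}\sqrt{\log(1/\delta)}/\epsilon'=\sqrt{8n\log(1/\delta)}/\epsilon'\ge\sqrt{8k\log(1/\delta)}/\epsilon'$.

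Finally, the two pieces use independent randomness and are functions of the same $w$, so basic sequential composition gives that the perturbed synthetic graph $G'=(\mathbb{V},\mathbb{E}_0\cup\mathbb{E}_1,w')$ is $(\epsilon'+\epsilon',\,0+\delta)=(2\epsilon',\delta)=(\epsilon,\delta)$-DP; the algorithm's output, being a deterministic function of $G'$, inherits $(\epsilon,\delta)$-DP by post-processing. I do not expect a substantive obstacle here: the only care required is the composition bookkeeping — treating the $\Theta(n^2)$ edges of $\mathbb{E}_0$ as one $\ell_1$-sensitivity-$1$ vector query rather than as separate queries (otherwise the $\mathbb{E}_0$ scale would blow up), matching the algorithm's $\sigma_1$ to the advanced-composition requirement over the $\le n$ coordinates of $\mathbb{E}_1$, and confirming the $\epsilon/2+\epsilon/2$ split — plus the two elementary observations (shift-invariance of the privacy loss, and conditioning on the data-independent sample $\mathbb{V}_1$).
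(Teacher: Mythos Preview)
Your proposal is correct and follows essentially the same route as the paper: split the mechanism into the $\mathbb{E}_0$ release (pure $\epsilon'$-DP via the $\ell_1$-sensitivity-$1$ vector Laplace mechanism) and the $\mathbb{E}_1$ release ($(\epsilon',\delta)$-DP via advanced composition over the $\le n$ distance queries with per-query scale $\sigma_1=\sqrt{8n\log(1/\delta)}/\epsilon'$), then combine by basic composition and post-processing to get $(\epsilon,\delta)$-DP. If anything, your write-up is more careful than the paper's, which omits the explicit justifications you supply for the shift-invariance of the Laplace noise, the data-independence of the sampling of $\mathbb{V}_1$, and the $1$-Lipschitz property of $w\mapsto d_G(u,v)$.
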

\begin{proof}
The privacy budget is divided into two parts:

\vspace{0.1in}
(Part 1) The noise added to $\mathbb{E}_0$. It is obvious that for two neighboring inputs differ in the total weights of edges in $\mathbb{E}_0$ by $1$, by adding Laplace noises according  $Lap(\mu_0,1/\epsilon')$, it achieve $(\epsilon',0)$-DP. Note that adding a constant $\mu_0$ independent of the edge weights to the edges does not affect weight privacy.

\vspace{0.1in}

(Part 2) The noise added to edges in $\mathbb E_1$. Similar arguments hold. There are at most $n^{1/2}\cdot n^{1/2}=n$ pairs in $\mathbb E_1$. Applying Laplace mechanism with composition theorem (\lemref{lem:ACT}), we know that adding noise following $Lap(\mu_1,\sigma_1)$ with $\sigma_1=\sqrt{8n\log (1/\delta})/\epsilon'=2\sqrt{2} n^{1/2}\sqrt{\log (1/\delta)}/\epsilon'$ suffices to achieve $(\epsilon',\delta)$-DP. Applying simple DP composition theorem again proves the $(2\epsilon',\delta)$-DP. We conclude by noticing that $\epsilon'=\epsilon/2$.
\end{proof}

\newpage

\subsection{Sublinear Bound on the Approximation Error}

Firstly, we have the following fact that with high probability, any long path in $G$ would contain at least two nodes in $\mathbb V_1$ with high probability.

\begin{lemma}\lemlab{lem:two_intersect}
For a given  path in $G(\mathbb V,\mathbb E,w)$  with number of edges  larger than  $2n^{1/2}\log (n^2/\gamma)$, it  contains at least two vertices  in $\mathbb V_1$ with probability $1-2\gamma/n^2$.
\end{lemma}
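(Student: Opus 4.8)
The plan is to reduce the claim to two applications of \lemref{lem:sample}. Fix a path $\tilde P$ in $G$ with more than $2n^{1/2}\log(n^2/\gamma)$ edges, hence with at least $2n^{1/2}\log(n^2/\gamma)$ distinct internal vertices (a path with $L$ edges visits $L+1$ distinct vertices). First I would split the vertex set of $\tilde P$, in the natural order along the path, into two disjoint consecutive blocks $U_1$ and $U_2$, each containing exactly $n^{1/2}\log(n^2/\gamma)$ of these vertices; this is possible precisely because the edge count exceeds $2n^{1/2}\log(n^2/\gamma)$.

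Next, I would apply \lemref{lem:sample} separately to $U_1$ and to $U_2$, each being a fixed vertex subset of $\mathbb V$ of the required size $n^{1/2}\log(n^2/\gamma)$. Each application gives that $U_i\cap\mathbb V_1\neq\emptyset$ with probability at least $1-\gamma/n^2$. A union bound over $i\in\{1,2\}$ then shows that with probability at least $1-2\gamma/n^2$ both $U_1$ and $U_2$ contain a vertex of $\mathbb V_1$; since $U_1$ and $U_2$ are disjoint, these are two distinct vertices of $\mathbb V_1$ lying on $\tilde P$, which is exactly the assertion.

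There is no substantive obstacle here; the only point requiring a little care is the bookkeeping between ``number of edges'' and ``number of vertices'' on the path, and the observation that \lemref{lem:sample} is stated for an \emph{arbitrary fixed} vertex set, so it applies verbatim to $U_1$ and $U_2$ even though these sets depend on the (fixed, non-random) path $\tilde P$ — the randomness is only in the sampling of $\mathbb V_1$, which is independent of $\tilde P$. One should also note the sampling in \algoref{private_path} is done with replacement, so $|\mathbb V_1|\le n^{1/2}$; this does not affect the argument since \lemref{lem:sample}'s proof already accounts for this by bounding the ``miss'' probability per draw by $1-1/n^{1/2}$.
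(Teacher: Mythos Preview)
Your proposal is correct and follows essentially the same route as the paper's own proof. The paper splits the path into three sub-paths so that two of them each contain at least $n^{1/2}\log(n^2/\gamma)$ edges, applies \lemref{lem:sample} to those two pieces, and finishes with a union bound; your version cuts directly into two disjoint vertex blocks of the required size, which is the same argument with slightly cleaner bookkeeping on the edges-versus-vertices count.
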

\begin{proof}
Let us divide the path into three parts, such that two of them have at least $n^{1/2}\log (n^2/\gamma)$ edges.
Based on \lemref{lem:sample}, we have that each part above intersects with $\mathbb V_1$ with at least one vertex, each with probability at least $1=\gamma/n^2$. As a result, by union bound, the whole path must contain at least two nodes in $\mathbb V_1$ with probability $1-2\gamma/n^2$.
\end{proof}

Recall $G'=(\mathbb{V},\mathbb{E}'=\mathbb{E}_0 \cup \mathbb{E}_1 ,w')$ is the constructed synthetic graph. Further define $G'_w=(\mathbb V,\mathbb E',w)$ as the graph with nodes $\mathbb V$, edges $\mathbb E'$ and weights $w$ (notice that it is different from $G'=(\mathbb V,\mathbb E',w')$), and now consider those canonical shortest paths found by the true weights $w$ on $G'_w(\mathbb V,\mathbb E',w)$. We show that each canonical shortest path $P^{\mathbb V_1}_{u,v}$, $u,v\in\mathbb V$ in $G'_w$ has error $|d(P_{u,v},w')-d(P_{u,v},w)|$ bounded by $\tilde O(n^{1/2})$, where $d(P_{u,v},w)$ is defined by \defref{def:shortest path}.

\begin{lemma} \lemlab{lem:canonical}
For any $u,v\in\mathbb V$, let $P^{\mathbb V_1}_{u,v}$ be the canonical shortest path (\defref{def:canonical}) found by the true weights $w$. Then, it holds that $|d(P^{\mathbb V_1}_{u,v},w')-d(P^{\mathbb V_1}_{u,v},w)|=O(n^{1/2} \sqrt{\log(1/\delta)}\log^2 (n/\gamma)/\epsilon) $ with probability $1-4\gamma$ for $\forall u,v \in  \mathbb V$.
\end{lemma}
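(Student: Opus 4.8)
The plan is to use the structure of the canonical shortest path to split the error into a contribution from the ``short'' edges (those in $\mathbb E_0$, each carrying shifted Laplace noise of scale $\sigma_0=O(1/\epsilon)$) and at most one ``shortcut'' edge (in $\mathbb E_1$, carrying noise of scale $\sigma_1=\tilde O(n^{1/2}/\epsilon)$), and then to show the canonical path uses only $\tilde O(n^{1/2})$ edges of $\mathbb E_0$ and at most one edge of $\mathbb E_1$. First I would fix $u,v$ and record the deterministic structure. Let $P_{u,v}$ be the true shortest path, determined by $w$. If $P_{u,v}$ contains at most one vertex of $\mathbb V_1$, then the canonical path equals $P_{u,v}$ and, since no two consecutive vertices of it lie in $\mathbb V_1$, all its edges belong to $\mathbb E_0$. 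Otherwise, writing $p,q$ for the vertices of $\mathbb V_1$ on $P_{u,v}$ closest to $u$ and to $v$, the canonical path is $P_1\cup\{(p,q)\}\cup P_2$, where $P_1$ is the $u$-to-$p$ part and $P_2$ the $q$-to-$v$ part of $P_{u,v}$; by the choice of $p,q$ the interiors of $P_1,P_2$ avoid $\mathbb V_1$, so their edges all lie in $\mathbb E_0$, while $(p,q)$ is the unique edge from $\mathbb E_1$. Since a subpath of a shortest path is a shortest path, the $p$-to-$q$ part of $P_{u,v}$ has $w$-length exactly $d(p,q)$, which is precisely the weight of $(p,q)$ in $G'_w$; hence $d(P^{\mathbb V_1}_{u,v},w)=d(P_{u,v},w)$, and the quantity to bound is just $\bigl|\sum_{e\in P^{\mathbb V_1}_{u,v}}X_e\bigr|$, where $X_e=w'(e)-w(e)$ is the noise on edge $e$.

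Next I would set up two high-probability events. For the noise magnitudes, \lemref{lem:single} applied to the at most $n^2$ edges of $\mathbb E_0$ gives, with probability at least $1-\gamma$, that $|X_e|\le 2\mu_0$ for all $e\in\mathbb E_0$ with $\mu_0=\log(n^2/\gamma)/\epsilon'$; applied to the at most $n$ edges of $\mathbb E_1$ it gives, with probability at least $1-\gamma$, that $|X_e|\le 2\mu_1$ for all $e\in\mathbb E_1$. For the segment lengths, the set of the first $n^{1/2}\log(n^2/\gamma)$ vertices of $P_{u,v}$ counted from $u$ is determined by $w$ alone, so \lemref{lem:sample} applies and shows it meets $\mathbb V_1$ with probability at least $1-\gamma/n^2$, and likewise from the $v$ side. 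A union bound over all $O(n^2)$ pairs and both endpoints shows that, with probability at least $1-2\gamma$, for every $u,v$ the vertices $p$ and $q$ lie within $n^{1/2}\log(n^2/\gamma)$ vertices of $u$ and $v$ (so $P_1,P_2$ have at most $n^{1/2}\log(n^2/\gamma)$ edges each), and moreover any $P_{u,v}$ containing at most one $\mathbb V_1$-vertex has at most $2n^{1/2}\log(n^2/\gamma)$ edges --- otherwise its first $n^{1/2}\log(n^2/\gamma)$ vertices from each end would be disjoint and each would contain a $\mathbb V_1$-vertex, a contradiction; this is essentially \lemref{lem:two_intersect}. Intersecting the three events, everything holds simultaneously with probability at least $1-4\gamma$.

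Conditioning on these events, I would finish by a direct edge count. If $P_{u,v}$ has at most one $\mathbb V_1$-vertex, the error is at most $2n^{1/2}\log(n^2/\gamma)\cdot 2\mu_0=O(n^{1/2}\log^2(n/\gamma)/\epsilon)$. Otherwise the two $\mathbb E_0$-segments contribute at most $2\cdot n^{1/2}\log(n^2/\gamma)\cdot 2\mu_0=O(n^{1/2}\log^2(n/\gamma)/\epsilon)$ and the single shortcut edge contributes at most $2\mu_1=2\sigma_1\log(n/\gamma)=O(n^{1/2}\sqrt{\log(1/\delta)}\log(n/\gamma)/\epsilon)$. Adding these and absorbing constants (using $\log(n^2/\gamma)\le 2\log(n/\gamma)$, $\epsilon'=\epsilon/2$, and $\sqrt{\log(1/\delta)}\ge 1$ for the usual small $\delta$) yields the stated $O(n^{1/2}\sqrt{\log(1/\delta)}\log^2(n/\gamma)/\epsilon)$ bound, holding for all $u,v$ on the $1-4\gamma$ event.

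The hard part will be handling the randomness of $\mathbb V_1$ correctly in the segment-length step: one cannot directly assert ``$P_1$ is short'' because $P_1$ itself depends on $\mathbb V_1$, so the argument must be phrased about fixed prefixes of the deterministic shortest paths and only then translated into a bound on the positions of $p$ and $q$ --- this is exactly why \lemref{lem:sample} is stated with the $\gamma/n^2$ slack, which survives the union bound over the $\Theta(n^2)$ pairs. The remaining points --- tracking which edges of the canonical path sit in $\mathbb E_0$ versus $\mathbb E_1$, and the identity $d(P^{\mathbb V_1}_{u,v},w)=d(P_{u,v},w)$ --- are routine but genuinely use the ``subpath of a shortest path'' fact together with the definition of the $\mathbb E_1$ edge weights.
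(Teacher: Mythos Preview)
Your proposal is correct and follows essentially the same approach as the paper: decompose the canonical path into at most one $\mathbb E_1$ shortcut plus two $\mathbb E_0$ segments, use \lemref{lem:sample}/\lemref{lem:two_intersect} to bound the segment lengths by $\tilde O(n^{1/2})$, and then combine per-edge noise bounds from \lemref{lem:single}. Your treatment is in fact a bit more careful than the paper's in separating the probability events and in phrasing the segment-length step via fixed prefixes of $P_{u,v}$ to avoid the dependency of $p,q$ on $\mathbb V_1$; the paper's proof leaves these points implicit.
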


\begin{proof}
Firstly, we have $|d(P^{\mathbb V_1}_{u,v},w')-d(P^{\mathbb V_1}_{u,v},w)|=|\sum_{e\in P^{\mathbb V_1}_{u,v}} (w'(e)-w(e))|$, and $w'(e)-w(e)$ is the corresponding Laplace noise added to edge $e$ in $P^{\mathbb V_1}_{u,v}$. The key observation in the proof is that, each canonical shortest path $P^{\mathbb V_1}_{u,v}$ can be divided into three parts $(u,...,p),(p,q), (q,...,v)$, where $p,q$ are the closest nodes in $\mathbb V_1$ to $u,v$, respectively, and the edge $(p,q)$ is the constructed shortcut provided by $\mathbb V_1$ and $\mathbb E_1$. Note that, $P^{\mathbb V_1}_{u,v}$ also might not contain a shortcut, but this does not matter in our analysis in the sequel. Denote $P_{u,v,}$ as the shortest path between $u$ and $v$. Consider two cases:
\begin{itemize}
    \item $|P_{u,v}|\leq 2n^{1/2}\log(n^2/\gamma)$. In this case, we know that the total length of $(u,...,p)$ and $(q,...,v)$ is bounded by $2n^{1/2}\log(n^2/\gamma)$. That is, $P^{\mathbb V_1}_{u,v}$ contains at most $2n^{1/2}\log(n^2/\gamma)$ edges from $\mathbb E_0$. By Laplace mechanism, with probability $1-\gamma$, every edge in $\mathbb E_0$ leads to at most $O(\log(n/\gamma)/\epsilon)$ error. Thus, edges in $\mathbb E_0$ contribute at most $O(n^{1/2}\log^2(n/\gamma)/\epsilon)$ error. $P^{\mathbb V_1}_{u,v}$ also contains at most one shortcut edges from $\mathbb V_1$, which gives $O(n^{1/2}\log(n/\gamma)\sqrt{\log(1/\delta)}/\epsilon)$ error, with another probability $1-\gamma$. Thus, with probability $1-2\gamma$, the error is bounded by $O(n^{1/2}\log^2(n/\gamma)/\epsilon)$.

    \item $|P_{u,v}|> 2n^{1/2}\log(n^2/\gamma)$. In this case, by \lemref{lem:two_intersect} and union bound, we know that with probability $1-\frac{2\gamma}{n^2}n^2=1-2\gamma$, $P^{\mathbb V_1}_{u,v}$ contains a shortcut $(p,q)$, and both the lengths of $(u,...,p)$ and $(u,...,q)$ are upper bounded by $n^{1/2}\log(n^2/\gamma)$. The remaining proof is similar to the arguments above. We have that with probability at least $1-4\gamma$, $|d(P^{\mathbb V_1}_{u,v},w')-d(P^{\mathbb V_1}_{u,v},w)|$ is bounded by $O(n^{1/2}\log^2(n/\gamma)/\epsilon)$.
\end{itemize}

In summary, we have shown that with probability at least $1-4\gamma$, $|d(P^{\mathbb V_1}_{u,v},w')-d(P^{\mathbb V_1}_{u,v},w)|$ is bounded by $O(n^{1/2}\log^2(n/\gamma)/\epsilon)$. The proof is complete.
\end{proof}

Additionally, we have the following result stating that the estimated distance is no smaller than the true distance for all pairs of vertices, with high probability.

\begin{lemma} \lemlab{lem:gamma}
Let $P_{u,v}$ be the shortest path between $u,v\in\mathbb V$ on $G$, and $P'_{u,v}$ the shortest path found on $G'$. Then with probability $1-2\gamma$, $d(P'_{u,v},w')\geq d(P_{u,v},w)$ for all $u,v \in \mathbb{V}$.
\end{lemma}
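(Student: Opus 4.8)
The plan is to exploit the positive mean of the injected Laplace noise: on a high-probability event every edge of $G'$ is at least as heavy as the substructure of $G$ it stands for, so no path in $G'$ can be shorter than the true distance.

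First I would pin down the good event. Write each noise on $\mathbb E_0$ as $X_e=\mu_0+Y_e$ with $Y_e\sim Lap(\sigma_0)$, and each noise on $\mathbb E_1$ as $X_e=\mu_1+Y_e$ with $Y_e\sim Lap(\sigma_1)$. Since $|\mathbb E_0|\le n^2$ and $\mu_0=\sigma_0\log(n^2/\gamma)$, \lemref{lem:single} gives that with probability at least $1-\gamma$ all $|Y_e|\le\mu_0$, hence $X_e\ge 0$ and $w'(e)=w(e)+X_e\ge w(e)$ for every $e\in\mathbb E_0$. Likewise $|\mathbb E_1|\le n$ and $\mu_1=\sigma_1\log(n/\gamma)$, so with probability at least $1-\gamma$ all noises on $\mathbb E_1$ are nonnegative, hence $w'(e)=APSP(e)+X_e\ge APSP(e)=d(a,b)$ for every shortcut $e=(a,b)\in\mathbb E_1$, where $d(a,b)$ is the true shortest-path distance in $G$. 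By a union bound both statements hold simultaneously with probability at least $1-2\gamma$; call this event $\mathcal E$.

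Next, working on $\mathcal E$, I would take an arbitrary pair $u,v\in\mathbb V$ and the path $P'_{u,v}$ realizing the distance in $G'$, and convert it into a walk $W$ from $u$ to $v$ in $G$: keep every edge of $P'_{u,v}$ lying in $\mathbb E_0$ as is, and replace every shortcut edge $(a,b)\in P'_{u,v}\cap\mathbb E_1$ by a true shortest path from $a$ to $b$ in $G$. Consecutive pieces share endpoints, so $W$ is a genuine $u$-to-$v$ walk in $G$, and its $w$-length equals $\sum_{e\in P'_{u,v}\cap\mathbb E_0}w(e)+\sum_{(a,b)\in P'_{u,v}\cap\mathbb E_1}d(a,b)$, which by the two inequalities of the previous paragraph is at most $\sum_{e\in P'_{u,v}}w'(e)=d(P'_{u,v},w')$. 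Finally, with nonnegative weights any $u$-to-$v$ walk in $G$ contains a $u$-to-$v$ path and hence has $w$-length at least $d(P_{u,v},w)=d(u,v)$, so $d(P'_{u,v},w')\ge d(W,w)\ge d(P_{u,v},w)$. Since $\mathcal E$ is a single event not depending on $(u,v)$, this holds for all pairs at once, with no further union bound.

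The argument is short, and the only points requiring care are the bookkeeping on $\mathcal E$ — one must invoke \lemref{lem:single} with the matching cardinality/threshold pair ($n^2$ with $\log(n^2/\gamma)$ on $\mathbb E_0$, at most $n$ with $\log(n/\gamma)$ on $\mathbb E_1$) so that each shift exactly absorbs the worst-case centered fluctuation — and the observation that the weight assigned to a shortcut is the true distance $d(a,b)$ \emph{in $G$}, which is what makes the expanded walk $W$ live inside $G$. There is no analytic obstacle; the shifted mean does all the work, which is exactly why centered noise would not suffice here.
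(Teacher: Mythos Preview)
Your proof is correct and follows the same approach as the paper: use \lemref{lem:single} on each of $\mathbb E_0$ and $\mathbb E_1$ to show that with probability $1-2\gamma$ every injected noise is nonnegative, and conclude that distances in $G'$ dominate those in $G$. Your version is in fact more careful than the paper's, which simply asserts that nonnegative noise forces $d(P'_{u,v},w')\ge d(P_{u,v},w)$; you make this rigorous by expanding each shortcut edge $(a,b)\in\mathbb E_1$ into a true shortest $a$--$b$ path in $G$, obtaining a walk $W$ whose $w$-length is sandwiched between $d(P_{u,v},w)$ and $d(P'_{u,v},w')$.
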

\begin{proof}
By adding each edge in $\mathbb{E}_1$ with noise according to $Lap(\mu_1,\sigma_1)$  where $\sigma_1=2\sqrt{2} n^{1/2}\sqrt{\log (1/\delta)}/\epsilon'$ and $\mu_1=\sigma_1 \log (n/\gamma)$, with high probability $1-\gamma$, every $Lap(\mu_1,\sigma_1)$ variable is greater or equal to $\mu_1- \sigma_1 \log (n/\gamma)\geq 0$ with probability $1-\gamma$ based on \lemref{lem:single}.

Similarly, for edges in $\mathbb{E}_0$, we add to each of them $Lap(\mu_0,\sigma_0)$ noise where $\sigma_0=1/\epsilon',\mu_0=\sigma_0 \log (n^2/\gamma)$.
Then  with another high probability $1-\gamma$, every variable according to $Lap(\mu_0,\sigma_0)$ is non-negative based on \lemref{lem:single}.
Therefore, since with high probability the added noises on all the edges are positive, the noisy shortest path distance must be larger than the true distance, i.e., $d(P'_{u,v},w')$. This proves the claim.
\end{proof}

Now, we are in the position to present the main error guarantee of this work.

\begin{theorem} \thmlab{main}
Let $G = (\mathbb{V}, \mathbb{E},w)$ be general graph with $n$ vertices. For some $\epsilon,\delta>0$, running \algoref{private_path} publishes the graph $G'$ with $(\epsilon,\delta)$-differential privacy w.r.t. the weights $w$. Let $\hat d(u,v)$ be the estimated shortest path distance between $u,v\in\mathbb V$ computed on $G'$. Then, with probability $1-4\gamma$, we have
$|\hat d(u,v)-d(u,v)|\leq O(\epsilon^{-1}n^{1/2}\log^2(n/\gamma)\sqrt{\log (1/\delta)})$ for all $u,v\in\mathbb V$.
\end{theorem}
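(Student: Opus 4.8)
The plan is to dispatch the privacy claim by direct appeal to the lemma immediately above (\algoref{private_path} \emph{is} the mechanism that outputs $G'$, and that lemma already shows it is $(\epsilon,\delta)$-DP with respect to $w$; publishing $G'$ itself is therefore covered, and post-processing is not even needed), and to prove the error bound by a two-sided ``sandwich'' estimate on the computed quantity $\hat d(u,v)=d(P'_{u,v},w')$, where $P'_{u,v}$ is the shortest $u$--$v$ path in $G'$. All the real work is in the error bound, and it should reduce almost mechanically to Lemmas~\lemref{lem:canonical} and~\lemref{lem:gamma}.

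For the \emph{upper} bound I would argue $\hat d(u,v)\le d(u,v)+\tilde O(n^{1/2})$ as follows. The canonical shortest path $P^{\mathbb V_1}_{u,v}$ — obtained from the true-weight shortest path $P_{u,v}$ by replacing the sub-path between its extreme $\mathbb V_1$-vertices $p,q$ by the single shortcut edge $(p,q)$ — is a genuine $u$--$v$ path of $G'$, since its non-shortcut edges lie in $\mathbb E\subseteq\mathbb E_0\cup\mathbb E_1=\mathbb E'$ and the shortcut edge lies in $\mathbb E_1\subseteq\mathbb E'$. Because $P'_{u,v}$ minimizes $d(\cdot,w')$ over all $u$--$v$ paths of $G'$, we get $\hat d(u,v)\le d(P^{\mathbb V_1}_{u,v},w')$. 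Lemma~\lemref{lem:canonical} then bounds $d(P^{\mathbb V_1}_{u,v},w')\le d(P^{\mathbb V_1}_{u,v},w)+O(\epsilon^{-1}n^{1/2}\sqrt{\log(1/\delta)}\log^2(n/\gamma))$ uniformly over all pairs, and the identity noted right after Definition~\defref{def:canonical} — that when each $\mathbb E_1$-edge carries its true pairwise distance one has $d(P^{\mathbb V_1}_{u,v},w)=d(P_{u,v},w)=d(u,v)$ — closes this side.

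For the \emph{lower} bound $\hat d(u,v)\ge d(u,v)$ I would invoke Lemma~\lemref{lem:gamma}: on the high-probability event that every added Laplace variable is non-negative — which follows from Lemma~\lemref{lem:single} together with the choices $\mu_0=\sigma_0\log(n^2/\gamma)$, $\mu_1=\sigma_1\log(n/\gamma)$ — every weight of $G'$ is at least the corresponding true weight and every shortcut edge carries at least the true distance $d(p,q)$, so no $G'$-path can undercut the true distance; hence $\hat d(u,v)=d(P'_{u,v},w')\ge d(P'_{u,v},w)\ge d(u,v)$. Combining the two sides on the intersection of the good events gives $d(u,v)\le\hat d(u,v)\le d(u,v)+O(\epsilon^{-1}n^{1/2}\sqrt{\log(1/\delta)}\log^2(n/\gamma))$ for all $u,v$ simultaneously; for the probability accounting I would observe that the ``noise bounded in magnitude'' events of Lemma~\lemref{lem:single} underlying Lemma~\lemref{lem:canonical} already imply the ``noise non-negative'' events used in Lemma~\lemref{lem:gamma}, so the good event of Lemma~\lemref{lem:canonical} is contained in that of Lemma~\lemref{lem:gamma} and the failure probability stays at $4\gamma$.

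I do not expect a substantial obstacle at the level of the theorem itself — the hard content lives in Lemmas~\lemref{lem:canonical} and~\lemref{lem:gamma}. The two points I would be most careful about are (i) verifying that $P^{\mathbb V_1}_{u,v}$ is actually a path available in $G'$ with exactly the claimed edges, so the upper-bound chain of inequalities is legitimate, and (ii) checking that the shortcut edges cannot manufacture ``fake'' short $u$--$v$ routes in the lower-bound direction; both are controlled by the single structural fact that each shortcut is assigned its true distance plus a non-negative perturbation.
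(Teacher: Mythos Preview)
Your proposal is correct and follows essentially the same approach as the paper: the paper also upper-bounds $\hat d(u,v)=d(P'_{u,v},w')$ by $d(P^{\mathbb V_1}_{u,v},w')$, invokes \lemref{lem:canonical} and the identity $d(P^{\mathbb V_1}_{u,v},w)=d(P_{u,v},w)$ for the upper side, invokes \lemref{lem:gamma} for the lower side, and combines them into the one-line sandwich (their inequality labeled~(a)). Your explicit observations that the canonical path is a genuine path in $G'$ and that the good event of \lemref{lem:canonical} is contained in that of \lemref{lem:gamma} are left implicit in the paper but are exactly the reasoning it relies on.
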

\begin{proof}
Let $P'_{u,v}$ be the shortest path between $u$ and $v$ in the synthetic graph $G'(\mathbb V, \mathbb E', w')$, then trivially, $d(P'_{u,v},w') \leq d(P^{\mathbb V_1}_{u,v},w')$. By \lemref{lem:canonical}, we know that $|d(P^{\mathbb V_1}_{u,v},w')-d(P^{\mathbb V_1}_{u,v},w)|$ is bounded by $ O(n^{1/2} \sqrt{\log(1/\delta)}\log^2(n/\gamma)/\epsilon)$ with probability $1-4\gamma$, $\forall u,v \in  \mathbb V$. In this event, we have $d(P'_{u,v},w') \geq  d(P_{u,v},w)$ based on \lemref{lem:gamma}. Notice that, $d(P_{u,v},w)\equiv d(P^{\mathbb V_1}_{u,v},w)$ by definition. Therefore, with  probability $1-4\gamma$,
\begin{align}
    |\hat d(u,v)-d(u,v)|=|d(P'_{u,v},w')-d(P_{u,v},w)|&\underset{(a)}{\leq} |d(P^{\mathbb V_1}_{u,v},w')-d(P^{\mathbb V_1}_{u,v},w)|  \nonumber\\
    &\leq O(n^{1/2} \sqrt{\log(1/\delta)}\log^2(n/\gamma)/\epsilon), \label{eq1}
\end{align}
which verifies the claim.
\end{proof}

\noindent\textbf{Comments.} As we see, the positive mean of Laplace noises implies \lemref{lem:gamma} which provides a key inequality in the proof of \thmref{main}. This result in fact allows us to only consider the error on the positive side. On the contrary, if we use mean-zero Laplace noises, then we can no longer bound $|d(P'_{u,v},w')-d(u,v)|$ by $|d(P^{\mathbb V_1}_{u,v},w')-d(u,v)|$ as in \lemref{lem:canonical}, since it is possible that $d(P'_{u,v},w')\leq d(P^{\mathbb V_1}_{u,v},w')<d(u,v)$. Therefore, if we simply add zero-mean noises, then directly computing the APSP distances on the constructed graph $G'$ would not guarantee the desired error level.

\begin{figure}[b!]
\mbox{
	\includegraphics[width=3.25in]{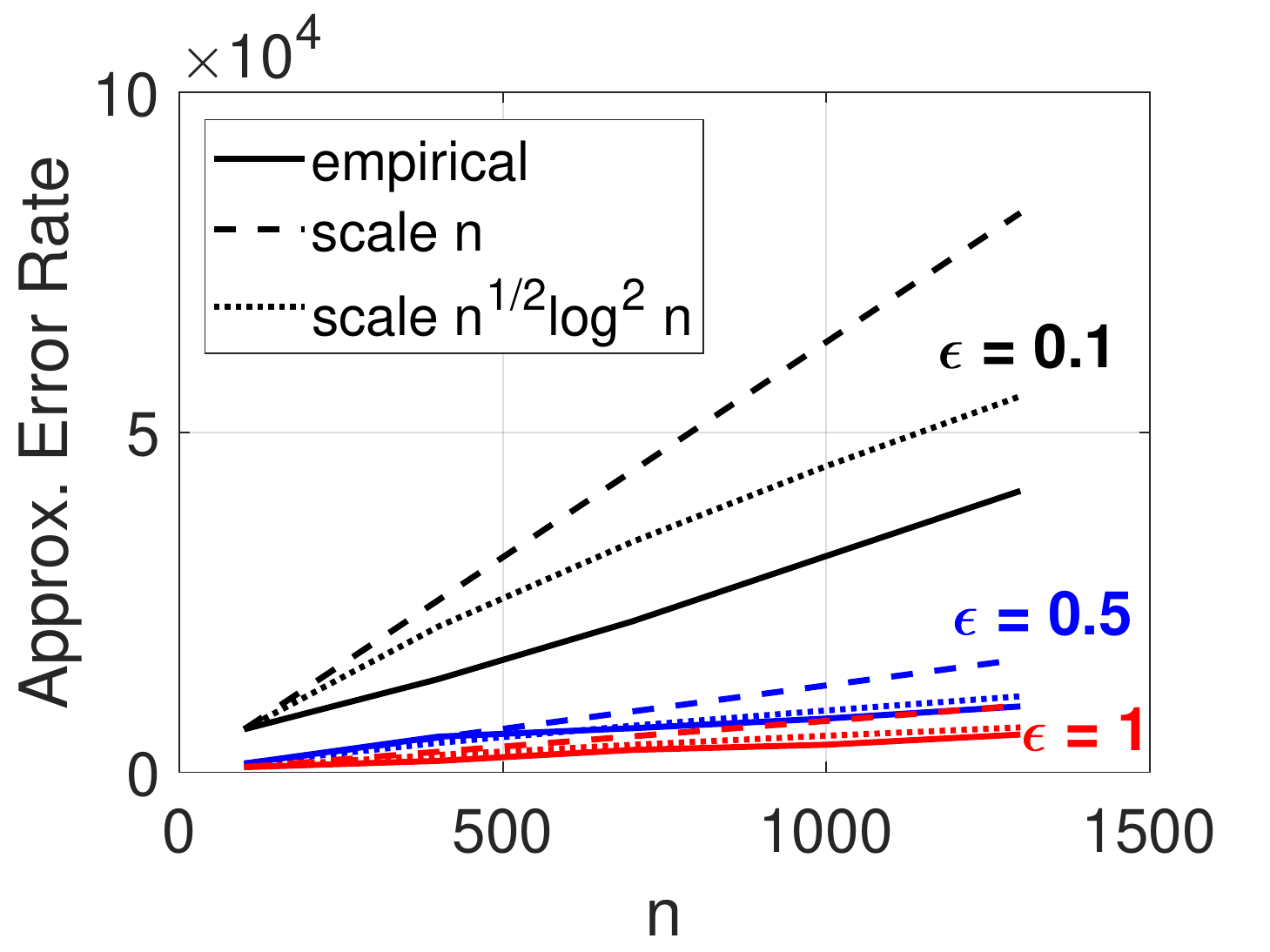}
	\includegraphics[width=3.25in]{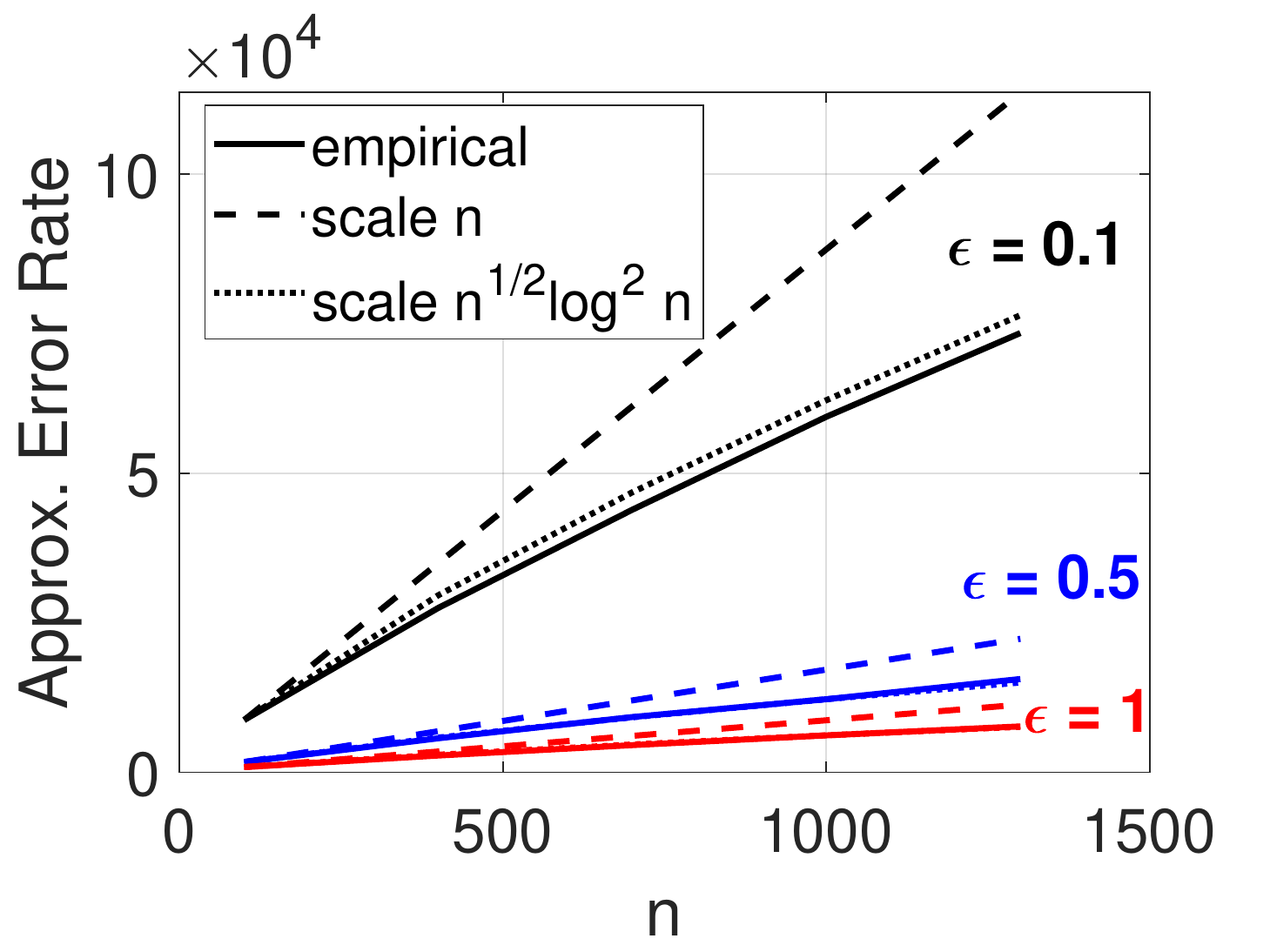}
}

\vspace{-0.1in}

	\caption{Empirical approximation error rate vs. graph size $n$ (solid curve), with predicted growth rate of $O(n)$ and $O(n^{1/2}\log^2 n)$ (dash and dotted curves). $\delta=0.01$, $\gamma=0.01$. Left: each edge weight is $Unif(2000,3000)$. Right: each edge weight follows $Unif(1e4,1e5)$. Results are averaged over 200 repetitions.}\figlab{error}
\end{figure}

\subsection{Numerical Example}	We present a numerical example on a multi-stage graph to justify the theory. We simulate a multi-stage graph, where each edge weight is generated from i.i.d. uniform distribution. Each stage contains 1 start node, 1 end node, and 9 nodes each connected to both the start and the end node. We consider multi-stage graphs here because in some sense it is one of the ``hardest'' cases for private APSP computation, since many shortest paths contain $O(n)$ edges. In \figref{error}, we present the empirical largest absolute error over all pairwise distances against the size of graph $n$. For the theoretical bounds, we start with the first empirical error (black) associated with $n=100$, and increase at the rate of $n$ (dash) and $n^{1/2}\log^2 n$ (dotted), respectively. We validate that the empirical error grows with rate slower than $n^{1/2}\log^2 n$, as given by our theory, for all $\epsilon$ values. Note that the error bound becomes tighter with large edge weights (right) than with small weights (left). This can be explained by (a) in Eq. (\ref{eq1}). When the edge weights follow $Unif(1e4,1e5)$ which is much larger than the noise magnitude, the shortest path $P'_{u,v}$ found on $G'$ would be the same as $P_{u,v}$ (or $P^{\mathbb V_1}_{u,v}$) in most cases. Thus, the LHS and the RHS of (a) would be close, making the bound tighter.

\vspace{0.05in}

\section{ Graph with Small Feedback Vertex Set}

\begin{algorithm2e}[b!]
\SetKwInput{Input}{Input}
\SetKwInOut{Output}{Output}
\Input{General graph $G=(\mathbb{V},\mathbb{E},w)$, private  parameter $\epsilon,\delta$ .}
\DontPrintSemicolon
$\epsilon'=\epsilon/3$.\;

Compute a feedback vertex set $S$ of $G$ using the 2-approximation algorithm in~\citet{DBLP:journals/ai/BeckerG96}. \hfill {\color{blue}// Step 1)}\;

Let $G'$ be the induced subgraph of vertex set $\mathbb V\setminus  S$ in $G$.\;

Compute the differentially private  all pairwise shortest path distances $\hat d(u,v)$  on trees/forest $G'$ using Algorithm 1 in~\cite{DBLP:conf/pods/Sealfon16} with privacy parameter $\epsilon'$. \hfill {\color{blue}// Step 2)}\;

Let $k:=|S|$.\;
Compute  $d(u,v)$ for $u, v\in S$, where $d(u,v)$  is the shortest path distance between $u$ and $v$. \hfill {\color{blue}// Step 3)}\;

\For {each   $u\in S,v\in S$ }
{
Let 	$\sigma_1:= (2\sqrt{2} k\sqrt{\log (1/\delta)})/\epsilon'$.\;
	Compute $X_e:= Lap(0,\sigma_1)$.\;
	$\hat d(u,v):= d(u,v)+X_e$.\;
}

\For {each vertex $u\in  S$}
{
 \For {each vertex $v \notin S$ adjacent to $u$} \hfill {\color{blue}// Step 4): To deal with the edges between $S$ and $\mathbb V \setminus S$\; }
 {
	Let $\sigma_0:= 1/\epsilon'$.\;
	Compute $X_e:= Lap(0,\sigma_0)$.\;
	$w'(u,v):= w(u,v)+X_e$.\;
	$\hat d(u,v):=w'(u,v)$.
 }
}

\For {each vertex $u$ in $\mathbb V \setminus S$}
{
 \For {each vertex  $v$ in $S$}
 {
 {\color{blue}// Step 4): To deal with the case that only the last vertex of shortest path is in $S$\; }

    $\hat d(u,v):=\min_{p \notin S}\{\hat d(u,p)+ w'(p,v) \}$.  \;

 }
}

\For {each vertex $u$ in $\mathbb V \setminus S$}
{
 \For {each vertex  $v$ in $S$}
 {
 {\color{blue}// Step 5): To deal with the case that shortest path  passes a vertex $p$ in $S$\; }

    $\hat d(u,v):=\min_{p \in S}\{\hat d(u,p)+ \hat d(p,v) \}$.\;

 }
}

\For {each pair  $u,v \in \mathbb V \setminus S$}
{

 {\color{blue}// Step 6): To deal with the case that shortest path  passes some vertex $u$ in $S$\; }

 $\hat d(u,v):=\min_{p \in S}\{\hat d(u,p)+ \hat d(p,v) \}$.\;

}

\Output{All pairwise distances $\hat d(u,v)$.}
\caption{Differentially private all pairwise shortest path distances release for graph with small feedback vertex number.}
\algolab{private_graph_feedback}
\end{algorithm2e}

In previous section, we have investigated the problem of DP all pairwise shortest distance problem on general weighted graphs. In this section, we consider a more special type of graph with small feedback vertex set (FVS), which is a common concept in graph analysis, e.g.,~\citep{Article:Stefan2010,Article:Charis2020}. We will design a new algorithm based on FVS computation that privately releases all pairwise distances. We will show that this method improves the error on general graphs when the FVS number is small. To begin with, the definition of FVS is formally stated as below.

\newpage

\begin{definition}[Feedback Vertex Set~\citep{DBLP:journals/ai/BeckerG96}]
Let $G=(\mathbb V, \mathbb E)$ be an undirected graph. A set $X \subseteq \mathbb V$ is called a feedback vertex set (FVS) if $G\setminus X$ is a forest, where $G\setminus X$ is the induced graph by $\mathbb V\setminus X$. The feedback vertex set number is the minimal cardinality over all the possible feedback sets.

\end{definition}

In words, a feedback vertex set (FVS) of a graph is a subset of vertices $X$ such that after removing these nodes from the graph, the subgraph induced by $\mathbb V\setminus X$ contains no cycles (i.e., is a forest). The FVS number is the smallest size of all the FVS's.

Next, we propose a differentially private algorithm that releases the APSP distances of graphs based on the feedback vertex set computation. The steps of \algoref{private_graph_feedback} are summarized below:
\begin{enumerate}[1)]
    \item We  compute a feedback vertex set $S$ of $G$ by using the 2-approximation algorithm in~\citet{DBLP:journals/ai/BeckerG96}, with $|S|=k$. Then the induced subgraph $G'$  of  vertex set $\mathbb V \setminus S$ is a forest;

    \item We use Algorithm 1 in~\citet{DBLP:conf/pods/Sealfon16} to obtain the private all pairwise shortest path distances in $G'$. For any pair $(u,v)$ in $G$ whose shortest path does not pass any vertex in $S$, $\hat d(u,v)$ on $G'$ is already a good estimation for $d(u,v)$;

    \item For the $k^2$ (recall $k=|S|$) pairwise distances of the node pairs in $S$, we can compute the APSP distances directly and add $Lap(0,\sigma_1)$ to achieve DP. We still use the $\hat d(u,v)$ to represent them;

    \item For each edge $(u,v)$ with  $u\in  S,v\in \mathbb V \setminus S$, we add noise according $Lap(0,\sigma_0=1/\epsilon')$ to $w(u,v)$.
    For a pair in $\{(u,v)|u\in \mathbb V\setminus S, v\in S\}$, if the shortest path between $u$ and $v$ does not pass other vertex in $S$ except $v$, then  there exist some neighbor $p$ of $v$ such that $d_{G}(u,v)=d_{G'}(u,p)+w(p,v)$ based on ~\lemref{lem:left}, where $d_G(u,v),d_{G'}(u,v)$ represent the shortest path distance between $u$ and $v$ in $G$ and $G'$ respectively. Hence we use  $\hat d(u,v):=\min_{p\notin S}\{\hat d(u,p)+ w'(p,v) \}$  to estimate $d(u,v)$;

    \item For a pair in $\{(u,v):u \in \mathbb V\setminus S, v\in S\}$, if the shortest path between $u$  and $v$ passes some  vertex $p \in S$ except $v$, then  there exist $p \in S, p'\notin S$ such that $d_{G}(u,v)=d_{G}(u,p)+d_{G}(p,v)$, and the shortest path between $u$ and $p$ does not pass any vertex in $S$ except $p$ based on \lemref{lem:right}. We obtained $\hat{d}(u,p)$ in step 4). We then use  $\hat d(u,v):=\min_{p \in S}\{\hat d(u,p)+ \hat d(p,v) \}$  to estimate $ d(u,v)$;

    \item For pairs in $\{(u,v) \in (\mathbb V \setminus S )^2\}$, if the shortest path between $u$ and $v$ passes  some vertex $p$ in $S$, then $d_G(u,v)= d_G(u,p)+ d_G(p,v)$. We can use  $\hat d(u,v):=\min_{p \in S}\{\hat d(u,p)+\hat d(p,v)\}$  to estimate $d(u,v)$, also  $\hat d(u,p),\hat d(p,v)$ had been computed in 4), 5).

\end{enumerate}

\vspace{0.1in}

The two auxiliary lemmas mentioned in step 4) and 5) are given below.

\begin{lemma}\lemlab{lem:left}
For a given graph $G(\mathbb V,\mathbb E,w)$, let  $G'$ be the induced subgraph of $G$ with vertex set $\mathbb V \setminus S$. For any pair $\{(u,v)|u\in \mathbb V\setminus S, v\in S\}$, if the shortest path between $u$ and $v$ does not pass other nodes in $S$, then there exist a  neighbor $p$ of $v$ such that $d_{G}(u,v)=d_{G'}(u,p)+w(p,v)$.
\end{lemma}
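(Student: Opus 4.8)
The plan is to analyze the shortest $u$–$v$ path in $G$ directly and peel off its last edge. Write $P_{u,v}=\{u,v_1,v_2,\dots,v_L,v\}$ for a shortest path between $u$ and $v$ in $G$. By hypothesis this path contains no vertex of $S$ other than $v$ itself, so $u,v_1,\dots,v_L\in\mathbb V\setminus S$. Set $p:=v_L$ (or $p:=u$ if $L=0$); then $p$ is a neighbor of $v$ in $G$ that lies in $\mathbb V\setminus S$, the edge $(p,v)$ appears on $P_{u,v}$, and hence $d_G(u,v)=\ell+w(p,v)$, where $\ell$ denotes the length of the prefix sub-path of $P_{u,v}$ from $u$ to $p$.

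Next I would argue that this prefix sub-path is itself a shortest $u$–$p$ path in $G$, using the standard fact that every contiguous sub-path of a shortest path is a shortest path; hence $\ell=d_G(u,p)$. Since all vertices of this sub-path lie in $\mathbb V\setminus S$ and $G'$ is the induced subgraph on $\mathbb V\setminus S$ with the same weights, the sub-path is also a valid path of $G'$, so $d_{G'}(u,p)\le d_G(u,p)$. Conversely, $G'$ is a subgraph of $G$, so any $u$–$p$ path in $G'$ is also a $u$–$p$ path in $G$ of the same length, giving $d_{G'}(u,p)\ge d_G(u,p)$. Combining the two inequalities yields $d_{G'}(u,p)=d_G(u,p)=\ell$.

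Putting the two identities together gives $d_G(u,v)=\ell+w(p,v)=d_{G'}(u,p)+w(p,v)$, which is exactly the claim. The only routine case to keep in mind is $L=0$, where the shortest path is the single edge $(u,v)$, $p=u$, $d_{G'}(u,p)=0$, and the identity degenerates to $d_G(u,v)=w(u,v)$. I do not expect a genuine obstacle here: the only place requiring mild care is the two-sided argument establishing $d_{G'}(u,p)=d_G(u,p)$, and this hinges on the prefix of the shortest path staying entirely within $\mathbb V\setminus S$, which is precisely what the hypothesis that the shortest $u$–$v$ path does not pass any other node of $S$ guarantees.
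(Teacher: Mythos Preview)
Your proof is correct and follows essentially the same approach as the paper: take the penultimate vertex $p$ of the shortest $u$--$v$ path, observe it lies in $\mathbb V\setminus S$, and use $d_G(u,p)=d_{G'}(u,p)$ to conclude. You simply supply more detail than the paper does---in particular the two-sided argument for $d_{G'}(u,p)=d_G(u,p)$ and the degenerate case $L=0$---but the underlying idea is identical.
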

\begin{proof}
Note that the shortest path between $u$ and $v$ does not pass other nodes in $S$ except $v$. Let the second last vertex  from $u$ to $v$ be $p$, with $p\notin S$.
Then we have $d_{G}(u,p)=d_{G'}(u,p)$, thus $d_{G}(u,v)=d_{G}(u,p)+w(p,v)=d_{G'}(u,p)+w(p,v)$.
\end{proof}

\vspace{0.1in}

\begin{lemma}\lemlab{lem:right}
Let graphs $G$ and $G'$ be defined as in \lemref{lem:left}. For any pair $\{(u,v)|u\in \mathbb V\setminus S, v\in S\}$, assume that the shortest path  $P_{u,v}$ between $u$ and $v$ does pass other nodes in $S$ except $v$. Then there exist some  vertex $p \in S, p'\notin S$ such that $d_{G}(u,v)=d_{G}(u,p)+d_{G}(p,v)$ and $d_{G}(u,p)=d_{G'}(u, p')+w(p',p)$.
\end{lemma}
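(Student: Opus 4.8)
The plan is to decompose the shortest path $P_{u,v}$ at the vertex in $S$ closest to $u$. Since $u \notin S$ but $v \in S$, and by assumption $P_{u,v}$ passes through at least one vertex of $S$ other than $v$, the set of vertices of $P_{u,v}$ lying in $S$ is nonempty and does not consist of $v$ alone. Walking along $P_{u,v}$ from $u$ towards $v$, let $p$ be the first vertex of $P_{u,v}$ that belongs to $S$. Then $p \neq v$ is possible or $p$ could in principle equal $v$, but since there is at least one $S$-vertex strictly before $v$, we get $p \in S$ with $p$ occurring strictly before $v$ on the path. The sub-path of $P_{u,v}$ from $u$ to $p$ contains no vertex of $S$ except its endpoint $p$.

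Next I would invoke optimality of sub-paths: any sub-path of a shortest path is itself a shortest path between its endpoints, so the $u$-to-$p$ segment has length $d_G(u,p)$ and the $p$-to-$v$ segment has length $d_G(p,v)$, giving $d_G(u,v) = d_G(u,p) + d_G(p,v)$. This is the first claimed identity. For the second identity, observe that the $u$-to-$p$ segment is exactly a shortest path of the type handled by Lemma \ref{xlemma:left}: its endpoints are $u \in \mathbb V \setminus S$ and $p \in S$, and it avoids $S$ except at $p$. Applying Lemma \ref{xlemma:left} to the pair $(u,p)$ yields a neighbor $p' \notin S$ of $p$ with $d_G(u,p) = d_{G'}(u,p') + w(p',p)$, which is the second claimed identity.

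The only subtlety — and the step I would be most careful about — is making sure $p$ genuinely differs from $v$ and that the $u$-to-$p$ segment really does avoid $S$ internally, which is exactly where the hypothesis "$P_{u,v}$ passes other nodes in $S$ except $v$" is used: it guarantees that the first $S$-vertex encountered from $u$ is not $v$, so the segment up to $p$ is a strict, nonempty prefix and Lemma \ref{xlemma:left} applies to it verbatim. Everything else is the standard sub-path optimality argument, so there is no real computational obstacle; the care is purely in the choice of the splitting vertex.
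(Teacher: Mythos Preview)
Your proposal is correct and follows essentially the same approach as the paper: pick $p$ as the first vertex of $S$ encountered along $P_{u,v}$ from $u$, use sub-path optimality to split $d_G(u,v)=d_G(u,p)+d_G(p,v)$, and then handle the $u$-to-$p$ prefix (which avoids $S$ internally) to get the second identity. The only cosmetic difference is that the paper names $p'$ directly as the predecessor of $p$ on $P_{u,v}$ and uses $d_G(u,p')=d_{G'}(u,p')$, whereas you obtain the same conclusion by invoking \lemref{lem:left} on the pair $(u,p)$; these are the same argument.
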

\begin{proof}
Let $p$ be the first vertex  in $S$ that  $P_{u,v}$ passes through, and denote $p'$ as the vertex just before $p$ in path $P_{u,v}$. We have that $d_G(u,p')=d_{G'}(u,p')$. Then we know that $d_{G}(u,v)=d_{G}(u,p)+d_{G}(p,v)$ and $d_{G}(u,p)=d_{G'}(u, p')+w(p',p)$.
\end{proof}

\newpage

\subsection{Privacy Analysis}

\begin{lemma}
\algoref{private_graph_feedback} achieves $(\epsilon,\delta)$-DP.
\end{lemma}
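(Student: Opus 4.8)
The plan is to decompose the total privacy budget $\epsilon$ into three equal parts $\epsilon' = \epsilon/3$, one for each of the three noise-injection stages in \algoref{private_graph_feedback}, namely: (i) the Laplace noise added to the $k^2$ pairwise distances among nodes in $S$ (Step 3), (ii) the Laplace noise added to the edges between $S$ and $\mathbb V\setminus S$ (the first loop in Step 4), and (iii) the invocation of Algorithm~1 of~\cite{DBLP:conf/pods/Sealfon16} on the forest $G'$ (Step 2). All subsequent steps (the $\min$-combinations in Steps 4--6) are post-processing on already-privatized quantities and therefore incur no additional privacy cost; this is the conceptual observation I would state first.

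For part (i), the object being released is the vector of exact shortest-path distances $(d(u,v))_{u,v\in S}$ in $G$. I would argue the $\ell_1$-sensitivity of this vector is at most $k^2$: each pair-distance changes by at most $1$ when the total edge weight changes by $1$, and there are $k^2$ such entries. Actually a tighter bound is available --- a unit change in weights changes the sum of all these distances by at most $k^2$, but more carefully one can bound the number of ``affected'' shortest paths; however for the stated $\sigma_1 = 2\sqrt{2}\,k\sqrt{\log(1/\delta)}/\epsilon'$ it suffices to view this as $k^2$ adaptive applications of the Laplace mechanism each with sensitivity $1$ and apply the advanced composition theorem (\lemref{lem:ACT}) with $k^2$ mechanisms: this yields $(\epsilon',\delta)$-DP provided $\sigma_1 \ge \sqrt{8 k^2 \log(1/\delta)}/\epsilon' = 2\sqrt{2}\,k\sqrt{\log(1/\delta)}/\epsilon'$, matching the algorithm. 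For part (ii), there are at most $m$ edges between $S$ and $\mathbb V\setminus S$, each with weight sensitivity $1$, but since a unit $\ell_1$ perturbation of $w$ is distributed across these edges, adding independent $Lap(0,1/\epsilon')$ noise to each of them is exactly the Laplace mechanism on the sub-vector of these weights (whose $\ell_1$-sensitivity is $1$), giving $(\epsilon',0)$-DP. For part (iii), I would simply cite that Algorithm~1 of~\cite{DBLP:conf/pods/Sealfon16} run with parameter $\epsilon'$ is $(\epsilon',0)$-DP with respect to the edge weights of $G'$, and note that the weights of $G'$ are a sub-vector of the weights of $G$, so neighboring $G\sim G'$ induce neighboring (or identical) inputs to that subroutine.

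Finally I would combine the three guarantees via basic (sequential) composition of DP: the overall mechanism is the adaptive composition of an $(\epsilon',0)$-DP mechanism, an $(\epsilon',0)$-DP mechanism, and an $(\epsilon',\delta)$-DP mechanism, hence $(3\epsilon', \delta)$-DP $= (\epsilon,\delta)$-DP, since $3\epsilon' = \epsilon$. The post-processing steps do not change this.

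The step I expect to require the most care is pinning down the sensitivity in part (i): writing it as $k^2$ independent unit-sensitivity queries and appealing to advanced composition is the cleanest route and is what the choice of $\sigma_1$ (with its factor $k$ rather than $k^2$) is calibrated to, so I would make sure to phrase it exactly that way rather than as a single Laplace mechanism with $\ell_1$-sensitivity $k^2$ (which would need noise scale $k^2/\epsilon'$ and be far too large). A secondary subtlety worth a sentence is verifying that the three noise processes act on disjoint (or appropriately nested) coordinates of the weight vector so that composition — rather than a more delicate parallel-composition argument — is what we actually need; here sequential composition is safe and sufficient regardless of overlap.
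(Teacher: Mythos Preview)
Your proposal is correct and follows essentially the same three-part decomposition as the paper: $(\epsilon',0)$-DP for the tree/forest subroutine via~\cite{DBLP:conf/pods/Sealfon16}, $(\epsilon',\delta)$-DP for the $k^2$ intra-$S$ distances via advanced composition (\lemref{lem:ACT}), and $(\epsilon',0)$-DP for the cross edges via the Laplace mechanism, combined by basic composition. Your treatment is in fact slightly more careful than the paper's own proof --- you explicitly invoke post-processing for Steps 4--6 and you use the correct scale $\sigma_1 = 2\sqrt{2}\,k\sqrt{\log(1/\delta)}/\epsilon'$ (the paper's proof text contains a typo with $n^{1/2}$ in place of $k$).
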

\begin{proof}
The privacy budget $\epsilon$ is divided into three parts:

\vspace{0.1in}
(Part 1) For the distances between all pairs in $\mathbb V \setminus S$, our method achieves $(\epsilon',0)$-DP by the result on trees from~\citep{DBLP:conf/pods/Sealfon16}.

\vspace{0.1in}

(Part 2) For the $k^2$ distances of all node pairs in $S$, by adding to each edge weights i.i.d. $Lap(0,\sigma_1)$ noises with $\sigma_1=\sqrt{8n\log (1/\delta})/\epsilon'=2\sqrt{2} n^{1/2}\sqrt{\log (1/\delta)}/\epsilon'$, we can achieve $(\epsilon',\delta)$-DP according to \lemref{lem:ACT}.

\vspace{0.1in}

(Part 3) For each edge $(u,v)$ with  $u \in \mathbb V\setminus S, v\in S$, obviously by Laplace mechanism $w'(u,v):=w(u,v)+Lap(1/\epsilon')$, we achieve $(\epsilon',0)$-DP to release all the pairwise distances.

Composing three privacy budges up and using simple composition theorem of DP, we show that \algoref{private_graph_feedback} achieves $(\epsilon,\delta)$-DP.
\end{proof}

\subsection{Error Bound}

We will make use of the following result regarding our step 2) on trees.

\begin{lemma}[All pairwise distance on trees~\citep{DBLP:conf/pods/Sealfon16}] \lemlab{lem:tree}
For a tree $T$ with non-negative edge weights $w$ and $\epsilon > 0$, there is an $\epsilon$-differentially private algorithm that releases
APSP distances such that with probability $1-\gamma$, all released distances have approximation error bounded $O(\log^{2.5}n \log(1/\gamma)/\epsilon)$.
\end{lemma}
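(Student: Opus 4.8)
The plan is to reduce the all-pairs problem on the tree $T$ to releasing, for each vertex, its distance to a small set of ``anchor'' vertices, and then to release those anchor-distances with a hierarchical (binary-tree-style) mechanism. First I would take a centroid decomposition of $T$: recursively pick a centroid vertex $c$ whose removal leaves components of size at most half, and recurse on each component, producing a rooted centroid tree $\mathcal C$ of depth $L=O(\log n)$. The key property is that for any $u,v\in\mathbb V$, if $c$ is their lowest common ancestor in $\mathcal C$ then the (unique) $T$-path $P_{u,v}$ passes through $c$, so $d(u,v)=d(u,c)+d(c,v)$. Hence it suffices to release, for every vertex $u$ and every centroid-ancestor $c$ of $u$ in $\mathcal C$ (only $O(\log n)$ per vertex), a private estimate $\hat d(u,c)$; a query $(u,v)$ is then answered by $\hat d(u,c)+\hat d(c,v)$, incurring at most twice the per-estimate error, and this is fine for privacy by post-processing.

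The core subroutine is: given a piece $S$ of the decomposition rooted at its centroid $c$, release $\{d(c,u):u\in S\}$ where $d(c,u)=\sum_{e\in P_{c,u}}w(e)$. For this I would build, on the rooted tree $S$, a second dyadic hierarchy of subtrees — the tree analogue of the dyadic-interval mechanism — so that each root-path $P_{c,u}$ decomposes into a disjoint union of $O(\log|S|)$ ``canonical'' segments, each of which is the signature segment of one hierarchy node, and, crucially, every edge of $S$ belongs to only $O(\log|S|)$ canonical segments. We release the weight-sum of every canonical segment perturbed by independent Laplace noise; since neighboring weights differ by at most one unit in $\ell_1$ and each edge influences only $O(\log|S|)$ released sums, the $\ell_1$-sensitivity of this release is $O(\log|S|)$, so Laplace scale $O(\log|S|/\epsilon')$ suffices for budget $\epsilon'$, and $\hat d(c,u)$ is formed as the sum of the $O(\log|S|)$ relevant noisy canonical sums.

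For privacy, the pieces at a fixed level of $\mathcal C$ are vertex- (hence edge-) disjoint, so running the subroutine over an entire level is a single $\epsilon'$-DP release by parallel composition; composing over the $L=O(\log n)$ levels and setting $\epsilon'=\epsilon/O(\log n)$ gives $\epsilon$-DP overall via \lemref{lem:basic} and basic composition, and the final output is a deterministic function of these releases. For the error, each $\hat d(u,v)$ is a sum of $O(\log n)$ independent Laplace variables of scale $O(\log n/\epsilon')=O(\log^2 n/\epsilon)$ (the canonical segments of $P_{u,c}$ together with those of $P_{c,v}$); combining the magnitude control of \lemref{lem:single} with a Bernstein-type concentration bound for a sum of $O(\log n)$ independent Laplaces — which saves a $\sqrt{\log n}$ over a naive term-by-term union bound — gives error $O(\log^{2.5} n\,\sqrt{\log(1/\gamma)}/\epsilon)$ per pair, and a union bound over the $\mathrm{poly}(n)$ released quantities (absorbed into the polylog and the $\log(1/\gamma)$ factor) yields the stated $O(\log^{2.5}n\,\log(1/\gamma)/\epsilon)$ bound holding for all pairs with probability $1-\gamma$.

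The main obstacle is the double-hierarchy bookkeeping: constructing the inner dyadic decomposition of a rooted tree so that every root-path is simultaneously a union of only $O(\log|S|)$ canonical segments while every edge lies in only $O(\log|S|)$ of them, and then verifying that the interaction between this inner hierarchy and the outer centroid hierarchy keeps the per-coordinate noise at $\mathrm{polylog}/\epsilon$ rather than compounding. Pinning the exponent down to exactly $2.5$ (rather than a cruder $3$ or $4$) is precisely what forces the use of concentration for the sum of Laplaces instead of a trivial bound, and is the place where the argument must be done carefully.
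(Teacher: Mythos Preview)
The paper does not prove this lemma at all: it is stated purely as a cited result from \citet{DBLP:conf/pods/Sealfon16} and used as a black box in the analysis of \algoref{private_graph_feedback}. Your proposal is essentially a correct reconstruction of Sealfon's original argument --- centroid decomposition to reduce APSP to $O(\log n)$ root-to-vertex distance releases, an inner dyadic/heavy-path hierarchy so each edge contributes to $O(\log n)$ canonical sums (hence $\ell_1$-sensitivity $O(\log n)$ per level), parallel composition within a level and basic composition across the $O(\log n)$ levels, and a concentration bound for a sum of $O(\log n)$ Laplace variables to recover the $\log^{2.5} n$ exponent --- so there is nothing further to compare against in this paper.
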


\vspace{0.1in}
The error bound of \algoref{private_graph_feedback} is given as below.

\begin{theorem}  \thmlab{theo:small_feedback_error}
Let $G = (\mathbb{V}, \mathbb{E},w)$ be a general graph with $n$ vertices. For some $\epsilon,\delta>0$, running \algoref{private_graph_feedback} publishes all APSP distances that is $(\epsilon,\delta)$-differentially private w.r.t. the weights $w$. With probability $1-3\gamma$, the additive error is bounded by
$O(\epsilon^{-1}k\log(k/\gamma)\sqrt{\log (1/\delta)}+\epsilon^{-1}(\log^{2.5}n)\log(1/\gamma)) $.
\end{theorem}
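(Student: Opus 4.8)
The plan is to establish the error bound conditional on a high-probability ``clean event'' and then check that this event has probability at least $1-3\gamma$; privacy is already settled by the preceding lemma, so only the utility claim remains. I would define three events. First, that the tree routine of Step 2) (Sealfon's Algorithm~1, invoked with parameter $\epsilon'=\epsilon/3$) releases every distance in the forest $G'$ with additive error at most $E_T:=O(\epsilon^{-1}\log^{2.5}n\log(1/\gamma))$; this is \lemref{lem:tree} applied componentwise to the trees of $G'$. Second, that all $k^2$ noises $Lap(0,\sigma_1)$ of Step 3) have magnitude at most $E_S:=\sigma_1\log(k^2/\gamma)=O(\epsilon^{-1}k\sqrt{\log(1/\delta)}\log(k/\gamma))$, which follows from \lemref{lem:single}. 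Third, that all of the at most $m\le n^2$ noises $Lap(0,1/\epsilon')$ added in Step 4) to edges between $S$ and $\mathbb V\setminus S$ have magnitude at most $E_0:=(1/\epsilon')\log(m/\gamma)=O(\epsilon^{-1}\log(n/\gamma))$, again by \lemref{lem:single}. A union bound gives probability at least $1-3\gamma$ for all three simultaneously; we condition on this henceforth. Since $E_0=O(E_T)$, it suffices to prove that every pair is estimated with error $O(E_T+E_S)$.

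For the overestimation direction I would decompose the true shortest path $P_{u,v}$ by where it first and last touches $S$. If $P_{u,v}$ avoids $S$ then $d_G(u,v)=d_{G'}(u,v)$ and the Step 2) value already has error $\le E_T$. Otherwise let $p,q$ be the first and last vertices of $S$ on $P_{u,v}$, so that $d_G(u,v)=d_G(u,p)+d_G(p,q)+d_G(q,v)$, with the $u\to p$ subpath meeting $S$ only at $p$ and symmetrically for $q\to v$. For a pair with $u\notin S$, $v\in S$ whose shortest path meets $S$ only at $v$, \lemref{lem:left} provides a neighbour $p'\notin S$ of $v$ with $d_G(u,v)=d_{G'}(u,p')+w(p',v)$, so the Step 4) minimum $\min_{p'\notin S}\{\hat d(u,p')+w'(p',v)\}$ over-shoots $d_G(u,v)$ by at most $E_T+E_0$; when the shortest path meets $S$ earlier, \lemref{lem:right} reduces it to a Step 4) estimate plus an $S$-pair distance, giving over-shoot $\le E_T+E_0+E_S$ at Step 5). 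Chaining once more, a pair $u,v\notin S$ whose shortest path passes through $S$ decomposes at its first $S$-vertex $p$ as $d_G(u,p)+d_G(p,v)$: here $d_G(u,p)$ is a Step 4) estimate (its path meets $S$ only at $p$), and $d_G(p,v)$ is an $S$-pair distance plus a Step 4) estimate (split the $p\to v$ tail at its last $S$-vertex), so the Step 6) value is $\le d_G(u,v)+O(E_T+E_0+E_S)$.

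For the matching underestimation bound I would push two-sided component bounds through the minima using the triangle inequality for true distances. Each ingredient the algorithm uses is at least the corresponding true distance minus its own error: forest estimates satisfy $\hat d(u,p')\ge d_{G'}(u,p')-E_T\ge d_G(u,p')-E_T$ because $G'$ is a subgraph of $G$; the noisy bridge weights satisfy $w'(p',v)\ge w(p',v)-E_0\ge d_G(p',v)-E_0$; and the $S$-pair estimates satisfy $\hat d(p,q)\ge d_G(p,q)-E_S$. Because Steps 4), 5), 6) each assign a minimum of sums of previously-computed estimates, and every sum $d_G(u,p)+d_G(p,v)\ge d_G(u,v)$, an induction over the constantly many levels of the construction gives $\hat d(u,v)\ge d_G(u,v)-O(E_T+E_0+E_S)$ for all pairs. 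Combining the two directions and absorbing $E_0$ into $E_T$ (with $\epsilon'=\Theta(\epsilon)$) yields $|\hat d(u,v)-d(u,v)|=O(\epsilon^{-1}k\log(k/\gamma)\sqrt{\log(1/\delta)}+\epsilon^{-1}\log^{2.5}n\log(1/\gamma))$ for every pair on the $1-3\gamma$ event, as claimed.

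The main obstacle, I expect, is reconciling the algorithm's sequential overwrites with this case analysis: a pair whose true shortest path meets $S$ only at an endpoint is handled correctly in Step 4), but its entry is revisited by the minima in Steps 5) and 6), so one must verify these later steps neither under-shoot (this is exactly the lower-bound induction above) nor discard the good Step 4) value — that is, the later assignments must be read as taking a minimum with the stored value, and the decomposition vertices $p,q$ guaranteed by \lemref{lem:left} and \lemref{lem:right} must be checked to lie in the index set over which the corresponding $\min$ ranges. Once this bookkeeping is pinned down, the probability accounting and the arithmetic of combining $E_T$, $E_S$, $E_0$ are routine.
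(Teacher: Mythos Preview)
Your proposal is correct and follows essentially the same approach as the paper: define the three high-probability events (forest estimates accurate via \lemref{lem:tree}, $S$-pair noises bounded via \lemref{lem:single}, cross-edge noises bounded via \lemref{lem:single}), union-bound to $1-3\gamma$, then argue the per-pair error is $O(E_T+E_S+E_0)$. The paper's proof in fact stops after bounding the three noise sources and simply asserts the total error is their sum, whereas you go further and supply the two-sided case analysis (overestimation via the path decomposition of \lemref{lem:left}/\lemref{lem:right}, underestimation via the triangle inequality pushed through the minima) that the paper leaves implicit; your version is strictly more rigorous. Your flagged obstacle about the sequential overwrites in Steps 5) and 6) is a genuine ambiguity in the algorithm's pseudocode that the paper does not address, and your resolution---reading each assignment as a minimum with the stored value---is the natural and necessary fix.
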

\begin{proof}
For $u\in S, v\in S$, by adding  $d(u,v)$ with $Lap(0,\sigma_1)$ noise  where $\sigma_1=2\sqrt{2} n^{1/2}\sqrt{\log (1/\delta)}/\epsilon'$, with probability $1-\gamma$, $|\hat d(u,v)-d(u,v)|=O(k \sqrt{\log (1/\delta)}\log (k/\gamma)$ $\forall u,v \in S$.
Based on \lemref{lem:tree}, we have that with probability $1 - \gamma$, all released distances in $G'$ have approximation error $O(\log^{2.5}n\log(1/\gamma)/\epsilon) $.
For those edges $(u,v)$ with $u\in \mathbb V\setminus S, v\in S$, we add to each $w(u,v)$ a Laplace noise according to $Lap(0,\sigma_0)$ with $\sigma_0=1/\epsilon'$. Thus, with another  probability $1-\gamma$, $|w'(u,v)-w(u,v)| \leq O(\log (n/\gamma)/\epsilon)$ $\forall \{ u\in \mathbb V \setminus S, v\in S\}$.
Union bound implies that, with probability $1-3\gamma$, the total error is bounded by $O(\epsilon^{-1}k\log(k/\gamma)\sqrt{\log (1/\delta)})+O(\epsilon^{-1}(\log^{2.5}n)\log(1/\gamma))$ as claimed.

\end{proof}

\noindent\textbf{Comments.} We see that when $k=|S|=o(n^{1/2})$, the error in \thmref{theo:small_feedback_error} is smaller than the $O(n^{1/2}\log^2 n)$ error in \thmref{main} for the general weighted graphs. Therefore, our result implies improved error bounds when the graph has small feedback vertex set such that $k=o(n^{1/2})$. Further, if $k=o(\log n)$, i.e., the graph is ``almost'' a forest, then the first term vanishes and the error reduces to $\tilde O(\log^{2.5} n)$ for trees~\citep{DBLP:conf/pods/Sealfon16}.

\vspace{0.1in}
\section{Conclusion}

In literature, to achieve  $(\epsilon,\delta)$-differential privacy (DP) in general weight private graphs, releasing all pairwise shortest path (APSP) distances incurs maximal approximation error $\tilde O(n)$, linear in the number of vertices. Recently, \cite{fan2022distances} studied this problem on grid graphs and trees, but the linear error barrier for general weighted graphs is still an open problem~\citep{DPorg-open-problem-all-pairs}. In this work, we propose a differentially private algorithm to release a carefully constructed graph, and computing  the APSP distance on the synthetic graph achieves $\tilde O(n^{1/2})$ error sublinear in $n$, thus answering the open question. With improved and extended techniques from~\cite{fan2022distances}, the idea of our approach is to augment the diameter of graph by a set of shortcuts along with adding shifted Laplace noise. Moreover, for graphs with small feedback vertex set number $k$, we also propose a DP algorithm to answer all pairwise shortest path distances each with error $\tilde O(k)$. This improves the result for general graphs when $k$, the feedback vertex set number, is small.

\newpage

\bibliography{dp}
\bibliographystyle{plainnat}

\end{document}